\title{Lorem ipsum}
\newenvironment{changemargin}[2]{%
 \begin{list}{}{%
  \setlength{\topsep}{0pt}%
  \setlength{\leftmargin}{#1}%
  \setlength{\rightmargin}{#2}%
  \setlength{\listparindent}{\parindent}%
  \setlength{\itemindent}{\parindent}%
  \setlength{\parsep}{\parskip}%
 }%
\item[]}{\end{list}}
\newtheorem{theorem}{Theorem}
\newtheorem{definition}{Definition}
\newtheorem{example}{Example}
\newtheorem{Rk}[theorem]{Remark}
\newtheorem{lemma}{Lemma}
\theoremstyle{nonumberplain}
\newtheorem{proof}{Proof}
\long\def\symbolfootnote[#1]#2{\begingroup%
\def\thefootnote{\fnsymbol{footnote}}\footnote[#1]{#2}\endgroup}
\begin{document}

\begin{center}

\begin{spacing}{2}

{\Large The interval ordering problem}

\end{spacing}
\small{Christoph D\"urr\symbolfootnote[1]{CNRS, Universit\'e
Pierre et Marie Curie, LIP6, F-75252 Paris Cedex 05, France.
christoph.durr@lip6.fr}, Maurice
Queyranne\symbolfootnote[2]{Sauder School of Business at the
University of British Columbia, Vancouver, Canada; and CNRS,
France. maurice.queyranne@sauder.ubc.ca}, Frits C.R.\
Spieksma\symbolfootnote[3]{University of Leuven, Operations
Research Group, Naamsestraat 69, B-3000 Leuven, Belgium.
Frits.Spieksma@econ.kuleuven.be},\\
Fabrice Talla Nobibon\symbolfootnote[4]{PostDoc researcher for
Research Foundation – Flanders, Center for Operations Research and
Business Statistics (ORSTAT), Faculty of Business and Economics,
KULeuven, Leuven, Belgium. E-mail:
Fabrice.TallaNobibon@econ.kuleuven.be}\symbolfootnote[5]{Scientific
collaborator Centre for Quantitative methods and Operations
Management (QuantOM), HEC-Management School, University of
Li\`ege, Belgium}, Gerhard J.
Woeginger\symbolfootnote[6]{Technical University of Eindhoven.
gwoegi@win.tue.nl}}
\end{center}

\begin{changemargin}{+1.5cm}{+1.5cm}

\noindent\hrulefill
\footnotesize

\noindent {\bf Abstract.}
For a given set of intervals on the real line, we consider the problem of
ordering the intervals with the goal of minimizing an objective function that
depends on the exposed interval pieces (that is, the pieces that are not
covered by earlier intervals in the ordering).
This problem is motivated by an application in molecular biology that concerns
the determination of the structure of the backbone of a protein.

We present polynomial-time algorithms for several natural special cases of
the problem that cover the situation where the interval boundaries are
agreeably ordered and the situation where the interval set is laminar.
Also the bottleneck variant of the problem is shown to be solvable in
polynomial time.
Finally we prove that the general problem is NP-hard, and that the existence
of a constant-factor-approximation algorithm is unlikely.

\bigskip
\noindent \textbf{Keywords:} dynamic programming; bottleneck
problem; NP-hard; exposed part; agreeable intervals; laminar
intervals.

\end{changemargin}

\noindent\hrulefill
\normalsize
\section{Introduction}
\nopagebreak Let us consider a set $\mathcal{I}$ of $n$ intervals
$I_j = [a_j, b_j)$ for $j=1,2,\ldots,n$ on the real line. The {\em
length} of interval $I_j$ is denoted by $|I_j|=b_j-a_j$. As usual,
the length of a union of disjoint intervals is the sum of the
lengths of the individual intervals. For an interval $I_j$ and a
subset $\mathcal{S}\subset \mathcal{I}$ of the intervals, we
define $I_j\setminus\bigcup_{I\in\mathcal{S}}I$ to be that part of
interval $I_j$ that is not covered by the union of the intervals
in $\mathcal{S}$; throughout this text this uncovered part will be
called the {\it exposed part} of $I_j$ relative to subset
$\mathcal{S}$. Notice that the exposed part depends upon
$\mathcal{S}$ and in general need not be an interval. (If the
intervals in $\mathcal{I}$ are pairwise disjoint, then of course
the exposed part of \emph{any} interval $I$ relative to \emph{any}
set $\mathcal{S}$ of intervals not containing $I$ is the interval
$I$ itself.)

We investigate an interval ordering problem that is built around a
cost function $f$ that assigns to every interval of length $p$ a
corresponding real cost $f(p)$. The cost of a set $\mathcal{S}$ of
pairwise disjoint intervals is the sum of the costs of the
individual intervals in $\mathcal{S}$. The cost of an ordering
$\alpha=\big(\alpha(1),\alpha(2),\ldots,\alpha(n)\big)$ of all $n$
intervals is the result of summing up in that order, for every
interval, the cost of its exposed part with respect to the
previous intervals. Formally, the problem is defined as follows.

\begin{definition}
{\bf The Interval Ordering Problem:}
Given a function $f:{\mathbb R} \rightarrow {\mathbb R}$ and $n$
intervals $I_1,\ldots,I_n$ over the real line, find an ordering
$\alpha \in \Sigma_n$ such that the cost
\begin{eqnarray*}
\sum_{k=1}^n f\big(|I_{\alpha(k)}
\setminus \bigcup{\rule[-0.2ex]{0ex}{2.2ex}}_{j=1}^{k-1}\, I_{\alpha(j)}|\big),
\end{eqnarray*}
is minimized,
where $\Sigma_n$ denotes the set of all the permutations of
$\{1,2,\ldots,n\}$.
\end{definition}

Observe that the interval ordering problem becomes trivial, if all
intervals are pairwise disjoint (since then all orderings yield
the same cost). In the rest of this paper, an instance of the
interval ordering problem is represented by
$\big(\mathcal{I},f\big)$ where $\mathcal{I}$ is the set of
intervals and $f$ is the cost function.
\begin{example}
\label{ex1}
Consider the instance that consists of the five intervals $I_1=[0,1)$,
$I_2=[1,2)$, $I_3=[2,3)$, $I_4=[3,6)$ and $I_5=[0,5)$, and the cost
function $f(x)=2^x$.
An optimal solution for this instance is given by the sequence
$\alpha=(1,2,3,5,4)$ with a total cost of $12$.

\centerline{\begin{picture}(0,0)%
\includegraphics{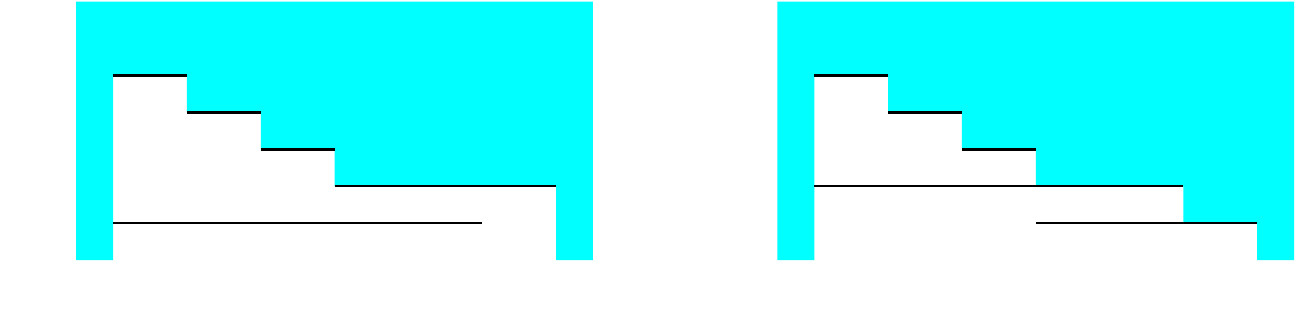}%
\end{picture}%
\setlength{\unitlength}{3108sp}%
\begingroup\makeatletter\ifx\SetFigFont\undefined%
\gdef\SetFigFont#1#2#3#4#5{%
  \reset@font\fontsize{#1}{#2pt}%
  \fontfamily{#3}\fontseries{#4}\fontshape{#5}%
  \selectfont}%
\fi\endgroup%
\begin{picture}(7891,1874)(3811,-1034)
\put(4051,344){\makebox(0,0)[b]{\smash{{\SetFigFont{9}{10.8}{\rmdefault}{\mddefault}{\updefault}{\color[rgb]{0,0,0}$I_1$}%
}}}}
\put(4051,119){\makebox(0,0)[b]{\smash{{\SetFigFont{9}{10.8}{\rmdefault}{\mddefault}{\updefault}{\color[rgb]{0,0,0}$I_2$}%
}}}}
\put(4051,-106){\makebox(0,0)[b]{\smash{{\SetFigFont{9}{10.8}{\rmdefault}{\mddefault}{\updefault}{\color[rgb]{0,0,0}$I_3$}%
}}}}
\put(4051,-331){\makebox(0,0)[b]{\smash{{\SetFigFont{9}{10.8}{\rmdefault}{\mddefault}{\updefault}{\color[rgb]{0,0,0}$I_4$}%
}}}}
\put(4051,-556){\makebox(0,0)[b]{\smash{{\SetFigFont{9}{10.8}{\rmdefault}{\mddefault}{\updefault}{\color[rgb]{0,0,0}$I_5$}%
}}}}
\put(8326,344){\makebox(0,0)[b]{\smash{{\SetFigFont{9}{10.8}{\rmdefault}{\mddefault}{\updefault}{\color[rgb]{0,0,0}$I_1$}%
}}}}
\put(8326,119){\makebox(0,0)[b]{\smash{{\SetFigFont{9}{10.8}{\rmdefault}{\mddefault}{\updefault}{\color[rgb]{0,0,0}$I_2$}%
}}}}
\put(8326,-106){\makebox(0,0)[b]{\smash{{\SetFigFont{9}{10.8}{\rmdefault}{\mddefault}{\updefault}{\color[rgb]{0,0,0}$I_3$}%
}}}}
\put(8326,-331){\makebox(0,0)[b]{\smash{{\SetFigFont{9}{10.8}{\rmdefault}{\mddefault}{\updefault}{\color[rgb]{0,0,0}$I_5$}%
}}}}
\put(8326,-556){\makebox(0,0)[b]{\smash{{\SetFigFont{9}{10.8}{\rmdefault}{\mddefault}{\updefault}{\color[rgb]{0,0,0}$I_4$}%
}}}}
\put(4726,434){\makebox(0,0)[b]{\smash{{\SetFigFont{9}{10.8}{\rmdefault}{\mddefault}{\updefault}{\color[rgb]{0,0,0}$2^1$}%
}}}}
\put(5176,209){\makebox(0,0)[b]{\smash{{\SetFigFont{9}{10.8}{\rmdefault}{\mddefault}{\updefault}{\color[rgb]{0,0,0}$2^1$}%
}}}}
\put(5626,-16){\makebox(0,0)[b]{\smash{{\SetFigFont{9}{10.8}{\rmdefault}{\mddefault}{\updefault}{\color[rgb]{0,0,0}$2^1$}%
}}}}
\put(6526,-241){\makebox(0,0)[b]{\smash{{\SetFigFont{9}{10.8}{\rmdefault}{\mddefault}{\updefault}{\color[rgb]{0,0,0}$2^3$}%
}}}}
\put(5671,-466){\makebox(0,0)[b]{\smash{{\SetFigFont{9}{10.8}{\rmdefault}{\mddefault}{\updefault}{\color[rgb]{0,0,0}$2^0$}%
}}}}
\put(9001,434){\makebox(0,0)[b]{\smash{{\SetFigFont{9}{10.8}{\rmdefault}{\mddefault}{\updefault}{\color[rgb]{0,0,0}$2^1$}%
}}}}
\put(9451,209){\makebox(0,0)[b]{\smash{{\SetFigFont{9}{10.8}{\rmdefault}{\mddefault}{\updefault}{\color[rgb]{0,0,0}$2^1$}%
}}}}
\put(9901,-16){\makebox(0,0)[b]{\smash{{\SetFigFont{9}{10.8}{\rmdefault}{\mddefault}{\updefault}{\color[rgb]{0,0,0}$2^1$}%
}}}}
\put(10576,-241){\makebox(0,0)[b]{\smash{{\SetFigFont{9}{10.8}{\rmdefault}{\mddefault}{\updefault}{\color[rgb]{0,0,0}$2^2$}%
}}}}
\put(11251,-466){\makebox(0,0)[b]{\smash{{\SetFigFont{9}{10.8}{\rmdefault}{\mddefault}{\updefault}{\color[rgb]{0,0,0}$2^1$}%
}}}}
\put(3826,-961){\makebox(0,0)[lb]{\smash{{\SetFigFont{9}{10.8}{\rmdefault}{\mddefault}{\updefault}{\color[rgb]{0,0,0}greedy ordering with respect to interval length}%
}}}}
\put(8776,-961){\makebox(0,0)[lb]{\smash{{\SetFigFont{9}{10.8}{\rmdefault}{\mddefault}{\updefault}{\color[rgb]{0,0,0}optimal ordering}%
}}}}
\end{picture}%
}
\end{example}
This example illustrates that in general an optimal solution will
\emph{not} sequence the intervals in order of increasing length
(and it can be verified that in Example~\ref{ex1} no such sequence
can yield the optimal objective value). The next example
illustrates that also the following natural greedy algorithm
fails: \emph{``Always select the interval with the smallest
exposed part relative to the intervals sequenced so far''}.  In
fact, the greedy algorithm can be arbitrarily bad, as witnessed by
the following example.
\begin{example}\label{ex2}
Consider a family of instances, where each instance
consists of $2k-1$ intervals: $A_1=[0,\,2k)$,
$A_2=[2k-\epsilon,\,4k)$, $A_3=[4k-\epsilon,\,6k)$,\ldots,
$A_k=[2k(k-1)-\epsilon,\,2k^2)$, $B_1=[k-\epsilon,\,2k^2)$,
$B_2=[3k-\epsilon,\,2k^2)$, $B_3=[5k-\epsilon,\,2k^2)$,\ldots,
$B_{k-1} = [2k^2-3k-\epsilon,\,2k^2)$, for some constants $k$,
$\epsilon>0$ with the cost function $f(x)=2^x$.

\centerline{\begin{picture}(0,0)%
\includegraphics{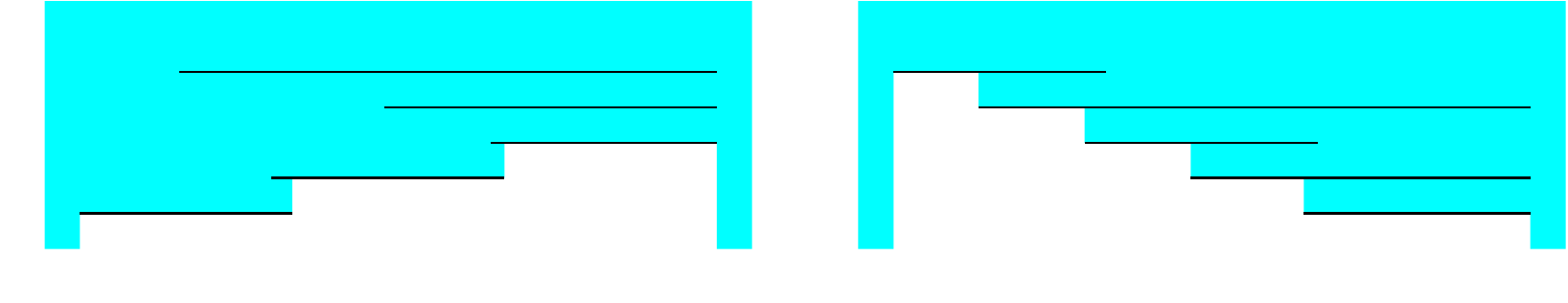}%
\end{picture}%
\setlength{\unitlength}{3108sp}%
\begingroup\makeatletter\ifx\SetFigFont\undefined%
\gdef\SetFigFont#1#2#3#4#5{%
  \reset@font\fontsize{#1}{#2pt}%
  \fontfamily{#3}\fontseries{#4}\fontshape{#5}%
  \selectfont}%
\fi\endgroup%
\begin{picture}(9961,1919)(2866,-7604)
\put(2881,-6811){\makebox(0,0)[b]{\smash{{\SetFigFont{9}{10.8}{\rmdefault}{\mddefault}{\updefault}{\color[rgb]{0,0,0}$A_2$}%
}}}}
\put(2881,-6586){\makebox(0,0)[b]{\smash{{\SetFigFont{9}{10.8}{\rmdefault}{\mddefault}{\updefault}{\color[rgb]{0,0,0}$A_3$}%
}}}}
\put(2881,-7036){\makebox(0,0)[b]{\smash{{\SetFigFont{9}{10.8}{\rmdefault}{\mddefault}{\updefault}{\color[rgb]{0,0,0}$A_1$}%
}}}}
\put(3151,-7486){\makebox(0,0)[lb]{\smash{{\SetFigFont{9}{10.8}{\rmdefault}{\mddefault}{\updefault}{\color[rgb]{0,0,0}greedy ordering with respect to exposed length}%
}}}}
\put(2881,-6361){\makebox(0,0)[b]{\smash{{\SetFigFont{9}{10.8}{\rmdefault}{\mddefault}{\updefault}{\color[rgb]{0,0,0}$B_1$}%
}}}}
\put(2881,-6136){\makebox(0,0)[b]{\smash{{\SetFigFont{9}{10.8}{\rmdefault}{\mddefault}{\updefault}{\color[rgb]{0,0,0}$B_2$}%
}}}}
\put(8326,-7531){\makebox(0,0)[lb]{\smash{{\SetFigFont{9}{10.8}{\rmdefault}{\mddefault}{\updefault}{\color[rgb]{0,0,0}optimal ordering}%
}}}}
\put(8056,-7036){\makebox(0,0)[b]{\smash{{\SetFigFont{9}{10.8}{\rmdefault}{\mddefault}{\updefault}{\color[rgb]{0,0,0}$A_3$}%
}}}}
\put(8056,-6811){\makebox(0,0)[b]{\smash{{\SetFigFont{9}{10.8}{\rmdefault}{\mddefault}{\updefault}{\color[rgb]{0,0,0}$B_2$}%
}}}}
\put(8056,-6136){\makebox(0,0)[b]{\smash{{\SetFigFont{9}{10.8}{\rmdefault}{\mddefault}{\updefault}{\color[rgb]{0,0,0}$A_1$}%
}}}}
\put(8056,-6361){\makebox(0,0)[b]{\smash{{\SetFigFont{9}{10.8}{\rmdefault}{\mddefault}{\updefault}{\color[rgb]{0,0,0}$B_1$}%
}}}}
\put(8056,-6586){\makebox(0,0)[b]{\smash{{\SetFigFont{9}{10.8}{\rmdefault}{\mddefault}{\updefault}{\color[rgb]{0,0,0}$A_2$}%
}}}}
\put(5131,-6766){\makebox(0,0)[lb]{\smash{{\SetFigFont{9}{10.8}{\rmdefault}{\mddefault}{\updefault}{\color[rgb]{0,0,0}$2^{2k}$}%
}}}}
\put(3826,-6991){\makebox(0,0)[lb]{\smash{{\SetFigFont{9}{10.8}{\rmdefault}{\mddefault}{\updefault}{\color[rgb]{0,0,0}$2^{2k}$}%
}}}}
\put(4096,-6091){\makebox(0,0)[lb]{\smash{{\SetFigFont{9}{10.8}{\rmdefault}{\mddefault}{\updefault}{\color[rgb]{0,0,0}$2^0$}%
}}}}
\put(5536,-6316){\makebox(0,0)[lb]{\smash{{\SetFigFont{9}{10.8}{\rmdefault}{\mddefault}{\updefault}{\color[rgb]{0,0,0}$2^0$}%
}}}}
\put(8596,-6091){\makebox(0,0)[lb]{\smash{{\SetFigFont{9}{10.8}{\rmdefault}{\mddefault}{\updefault}{\color[rgb]{0,0,0}$2^{k-\epsilon}$}%
}}}}
\put(11566,-6991){\makebox(0,0)[lb]{\smash{{\SetFigFont{9}{10.8}{\rmdefault}{\mddefault}{\updefault}{\color[rgb]{0,0,0}$2^{2k+\epsilon}$}%
}}}}
\put(10486,-6766){\makebox(0,0)[lb]{\smash{{\SetFigFont{9}{10.8}{\rmdefault}{\mddefault}{\updefault}{\color[rgb]{0,0,0}$2^k$}%
}}}}
\put(9811,-6541){\makebox(0,0)[lb]{\smash{{\SetFigFont{9}{10.8}{\rmdefault}{\mddefault}{\updefault}{\color[rgb]{0,0,0}$2^k$}%
}}}}
\put(9136,-6316){\makebox(0,0)[lb]{\smash{{\SetFigFont{9}{10.8}{\rmdefault}{\mddefault}{\updefault}{\color[rgb]{0,0,0}$2^k$}%
}}}}
\put(6481,-6541){\makebox(0,0)[lb]{\smash{{\SetFigFont{9}{10.8}{\rmdefault}{\mddefault}{\updefault}{\color[rgb]{0,0,0}$2^{2k}$}%
}}}}
\end{picture}%
}

A greedy sequence is $\left(A_1\right.$, $A_2$, \ldots, $A_{k-1}$,
$A_k$, $B_{k-1}$, $B_{k-2}$, \ldots, $\left.B_1\right)$ and
achieves a cost of $k2^{2k}+{k-1}$, whereas the optimal solution
is $\left(A_k\right.$, $B_{k-1}$, $A_{k-1}$, $B_{k-2}$, \ldots,
$A_2$, $B_1$, $\left.A_1\right)$ and has the cost of
$2^{2k+\epsilon}+(2k-3)2^{k}+2^{k-\epsilon}$. The ratio between
both costs can be made arbitrarily large, by choosing appropriate
$k$ and small $\epsilon>0$.

\end{example}

The contributions of this paper are twofold: on the positive side,
we describe polynomial-time algorithms for some natural and fairly
general special cases of the problem. On the negative side, we
establish the computational complexity (NP-hardness) and the
in-approximability of the problem.

The paper is organized as follows. In Section~\ref{motivation}, we
describe the motivating real world application (in molecular
biology) that stands behind the interval ordering problem. In
Section~\ref{special_cases}, we formulate and present a number of
special cases of the problem that can be solved in polynomial
time. In Section~\ref{complexity}, we present complexity and
in-approximability results. We conclude in
Section~\ref{conclusion}.

\section{Motivation}\label{motivation}

The interval ordering problem studied in this paper is motivated
by a special case of the so-called  \emph{distance geometry
problem}~\cite{mucherino11,lavor11-1,lavor11-2,lavor11-3}.
Formally, an instance of the latter consists of an undirected
graph $G(V,E)$ with positive edge weights $d:E \rightarrow \mathbb
R_+$. The goal is to find an embedding of the vertices into some
Euclidian space,  say  $p: V \rightarrow \mathbb R^2$, satisfying
the requested distances, i.e.\ for every edge $(u,v)$, we must
have $||p(u)-p(v)||=d(u,v)$. This problem appears in the areas of
graph drawing, localizing wireless sensors, and also in protein
folding as we now explain.

The protein folding problem consists of computing the spatial
structure of a protein.  To simplify the notations, we restrict
the problem to the 2-dimensional space; this does not alter its
essence. A protein is a huge molecule consisting of many different
atoms linked together. Consider a simplified version of this
problem where we only want to determine the structure of the {\em
backbone} of the protein, that is, we are interested in
determining the position of the main string of atoms. The exact
sequence of atoms is known, and different approaches are being
used in practice to determine their spatial structure. One
possibility is to use Nano Magnetic Resonance (NMR) to determine
the distances between some pairs of atoms. The goal is then to
reconstruct a folding that matches the measured distances. This
problem, also called {\it $3$-dimensional discretizable molecular
distance geometry problem}, is NP-hard (see~\cite{lavor11-1}), and
different algorithms have been proposed for it; we refer
to~\cite{lavor11-2} for a recent overview.

Formally in the problem of reconstructing the backbone of a
protein, we are given a vertex set $V=\{1,2,\ldots,m\}$,
enumerating all the atoms of the backbone, together with distances
$d(i,j)$ for some pairs $i,j\in V$. It is a common assumption that
all $d(i,j)$ with $i+1\leq j \leq i+2$ are given and
$|d(i,i+1)-d(i+1,i+2)|<d(i,i+2)<d(i,i+1)+d(i+1,i+2)$ for all
$i=1,\ldots,m-2$. The first assumption is motivated by the fact
that the NMR reveals distances between atoms which are close to
each other.  The second assumption is motivated by the chemical
fact that in general atoms in molecules are not in co-linear
positions.  We call $d(i,j)$ a \emph{short range} distance if
$i+1\leq j\leq i+2$, and a \emph{long range} distance otherwise.

These assumptions give the problem a combinatorial structure. By
translation invariance, without loss of generality we can place
vertex $1$ in the origin $(0,0)$. By rotation invariance, without
loss of generality we can place vertex $2$ in $(d(1,2),0)$. Now for
vertex $3$ there are only two positions respecting the distance
$d(1,3)$, which are the two intersection points of the circle of
radius $d(1,3)$ centered at $(0,0)$, and the circle of radius
$d(2,3)$ centered at $(d(1,2),0)$.  In a similar manner, there are
exactly two possible positions of vertex $i+2$ relative to the
segment between vertices $i$ and $i+1$. Therefore with fixed
positions for vertices $1$ and $2$, there are exactly $2^{m-2}$
embeddings satisfying the short range distances. We could describe
each embedding by a binary string $x_{3},x_{4},\ldots,x_{m}$, where
bit $x_{i}$ is 1 if and only if the triangle formed by vertices
$i-2,i-1,i$ is oriented clockwise. But in order to circumvent the
symmetry inherent to this problem, we describe each embedding by a
binary string $y_3,y_4,\ldots,y_m$, where $y_j = x_3 \oplus x_4
\oplus \ldots \oplus x_j$. See Figure~\ref{fig:b5} for illustration.

\begin{figure}[htbp]
\begin{center}
\includegraphics[height=8cm]{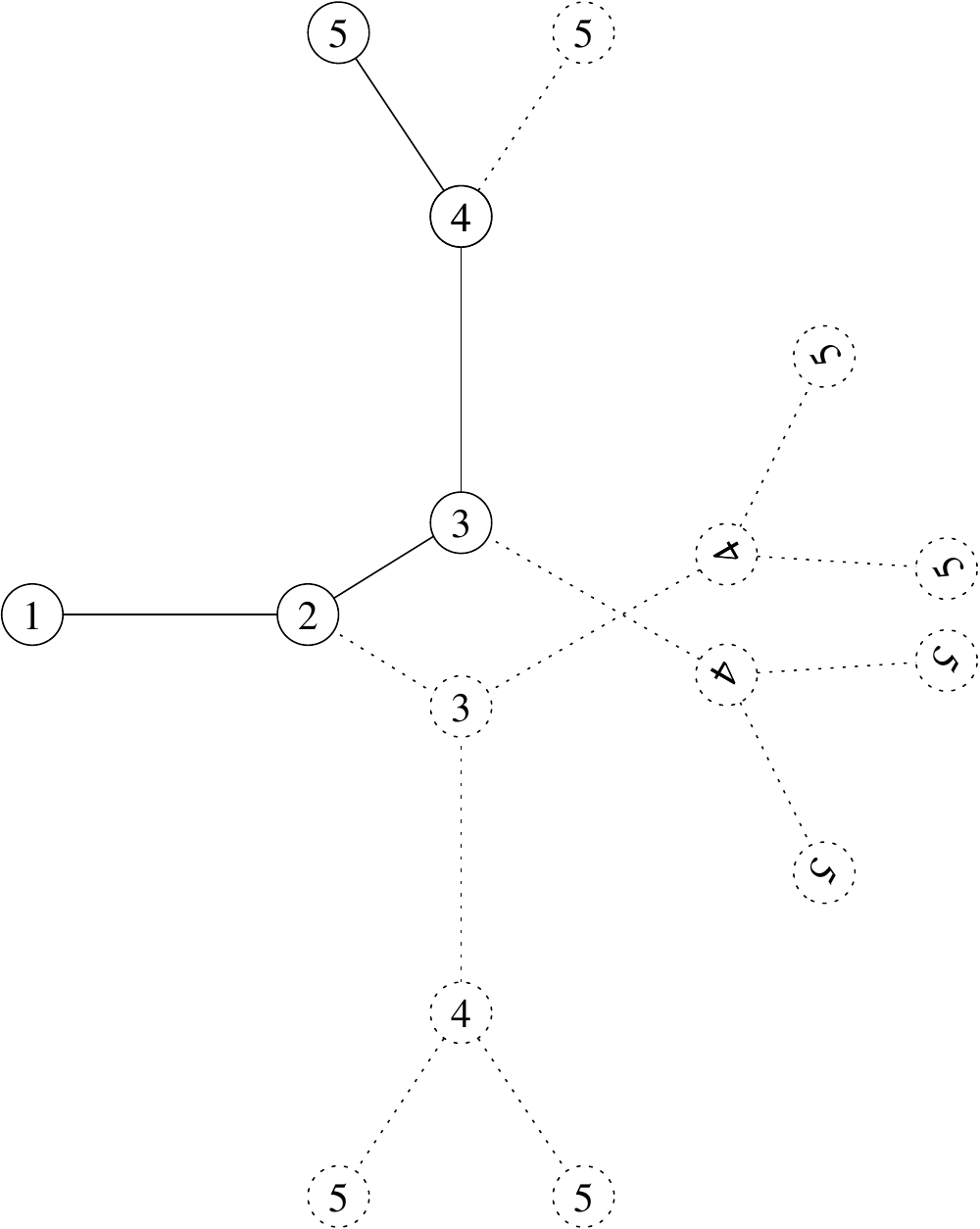}
\caption{{\bf All $8$ possible embeddings of a $5$ vertex
instance. The embedding described by the bit-string $000$ is
depicted with solid lines.}} \label{fig:b5}
\end{center}
\end{figure}

Now every long range distance $d(i,j)$ implies a constraint on the
unknown binary string $y$. It enforces the bits
$y_{i+3},y_{i+4},\ldots,y_{j}$ to those positions that yield an
embedding such that atoms ${i}$ and ${j}$ are at the right
distance. The problem now is to find, as efficiently as possible,
values for the bits satisfying all measured distance constraints.
Let us now state some notation to arrive at a formal definition of
the problem.

{\bf Notation} \nopagebreak If $a > b$ then $[a, b]$ is the empty
interval. For any $a \leq b$ by $\{0, 1\}^{[a, b]}$ we denote the
set of all bit strings of length $b - a + 1$ indexed from $a$ to
$b$. If $[a, b] \subseteq [c, d]$ and $y \in \{0, 1\}^{[c, d]}$
then we denote by $y [a, b]$ the restriction of $y$ to the indices
from $a$ to $b$. We use $\{0, 1\}^m$ as a shortcut notation for
$\{0, 1\}^{[1, m]}$.

\begin{definition}
{\bf The BitString-Reconstruction Problem (BSRP):} We are given
an integer $m$, and $n$ triplets $(a_i, b_i, T_i)$ where $1 \leq
a_i < b_i \leq m$, $T_i : \{0, 1\}^{[a_i, b_i]} \rightarrow \{0,
1\}$. The function $T_i$ is an oracle that returns $1$ at a single
element of the domain. The goal of the BSRP is to find a bit
string $y \in \{0, 1\}^m$, such that for all $i = 1, \ldots, n$ we
have $T_i (y [a_i, b_i]) = 1$.
\end{definition}

The idea is that a triplet in BSRP corresponds to a given distance
between atoms $i$ and $j$ with $i+3 \leq j$ in the folding
problem. Formally, a triplet is defined by  $(a=i,b=j,T)$ where
$T$ is the boolean function, that accepts a bit string $z$ if and
only if $z=y[a,b]$ for every bit string $y\in\{0,1\}^m$ describing
an embedding where $i$ and $j$ are at the given distance $d(i,j)$.
At this point we assume that there is a unique bit string $z$ with
this property. In two dimensions, this is equivalent to fixing the
position of the third vertex and in three dimensions this boils
down to fixing the fourth vertex as well. Already with this strong
simplification, we are facing a non-trivial and interesting
algorithmic problem.

A straightforward algorithm to solve BSRP employs a brute force
approach (see~\cite{mucherino11} for a similar method called
Branch-and-Prune algorithm): by letting $\xi$ be a symbol
representing an unspecified bit, the idea of brute-force search is
to start with a completely unspecified string $y = \xi^n \in \{0,
1, \xi\}^n$, and to refine it using the distances between atoms
$i$ and $j$ with $|i-j|>3$. More precisely:

\begin{algorithm}[!htp]
\caption{The BruteForce search algorithm} \label{bruteforce_search}
\begin{algorithmic}[1]
\FOR{$i = 1, \ldots, n$} \STATE {Let $w=y[a_i,b_i]$ and let
$\ell_i$ be the number of unspecified bits in $w$} \STATE {Search
for $z$ such that $T_i[z]=1$, ranging over all $2^{\ell_i}$
different replacements of $\xi$ in $w$} \IF{found} \STATE
{replace, in $y$, $y[a_i, b_i]$ by $z$} \ELSE \STATE {exit and
announce that there is no solution} \ENDIF \ENDFOR \STATE {Return
$y$, replacing all remaining occurrences of $\xi$ by an arbitrary
bit}
\end{algorithmic}
\end{algorithm}

The running time of the BruteForce search algorithm is $O
\big(\sum_{i = 1}^n 2^{\ell_i}\big)$ and it depends on the order
in which the triplets in the instance are presented to the
algorithm. The only remaining question is in which order to
process the given distances. In fact, it is our goal to find an
order for the triplets in the instance of the BSRP to be passed to
the BruteForce search algorithm in order to minimize the running
time. This leads to the interval ordering problem that was
described in the introduction with the following additional
structure: (i) all data are integral, and (ii) the cost function
$f$ is given by $f(x) = 2^x$.

We should point out here that the protein folding application does
not necessarily give rise to instances that display the special
structures that we will discuss in Section~\ref{special_cases}:
agreeable intervals, and laminar intervals.

\section{Some polynomial time solvable cases}\label{special_cases}
\nopagebreak In this section, we study some special cases of the
interval ordering problem that can be solved in polynomial time.
We first consider the case where the intervals are {\em
agreeable}. We derive an $O(n^3)$ dynamic programming algorithm
for solving this special case for any cost function $f$. When the
cost function is continuous and convex, we propose a dynamic
programming algorithm with time complexity $O(n^2)$. Next, we
consider the case where the intervals are {\em laminar} and
describe polynomial-time algorithms for solving the problem when
the cost function $f$ is such that the function $g(x) = f(x)-f(0)$
is either super-additive or sub-additive. Finally, we study the
bottleneck variant of the interval ordering problem and show that
it can be solved in polynomial time when the cost function $f$ is
either non-decreasing or non-increasing.

\subsection{Agreeable intervals}
\label{sectie_agreeable} \nopagebreak We say that a set
$\mathcal{I}$ of $n$ intervals $I_i = [a_i, b_i)$, for $i=1, 2,
\ldots, n$ is {\em agreeable} if there exists a permutation $\gamma$
of $\{1, \ldots, n\}$ such that $a_{\gamma(1)} \leq \ldots \leq
a_{\gamma(n)}$ and $b_{\gamma(1)} \leq \ldots \leq b_{\gamma(n)}$.
In other words, the ordering of the intervals induced by the left
endpoints is the same as the ordering induced by the right
endpoints. For ease of exposition, we will assume in the rest of
this section that $\gamma$ is the permutation identity: thus we have
$a_1 \leq \ldots \leq a_n$ and $b_1 \leq \ldots \leq b_n$. We can
assume that two consecutive intervals $I_i$ and $I_{i+1}$ overlap
(that is $a_{i + 1} < b_i$) because otherwise this would split the
problem into two sub-problems that can be solved independently. In
what follows, we first consider the general case with an arbitrary
cost function $f$, followed by a special case where the cost
function $f$ is continuous and convex.

\subsubsection{Arbitrary cost function} \nopagebreak In this section,
we consider instances $\big(\mathcal{I},f\big)$ of the interval
ordering problem with $\mathcal{I}$ agreeable and $f$ arbitrary.
Observe that in the case of agreeable intervals, after selecting the
first interval, the problem decomposes into (at most) two unrelated
instances that are each agreeable; we will use this property to
derive a dynamic programming algorithm.

For a formal definition of the decomposition, consider the set
$\mathcal{I} =\{I_1,I_{i+1}\ldots,I_n\}$ of agreeable intervals.
We consider the exposed parts of each of these intervals with
respect to $\{I_j\}$, $1 \leq j \leq n$. Since $\mathcal{I}$ is
agreeable, the exposed parts are again intervals, and we
distinguish between those before $I_j$ and those after $I_j$.

\begin{figure}[bht]
\begin{center}
\input{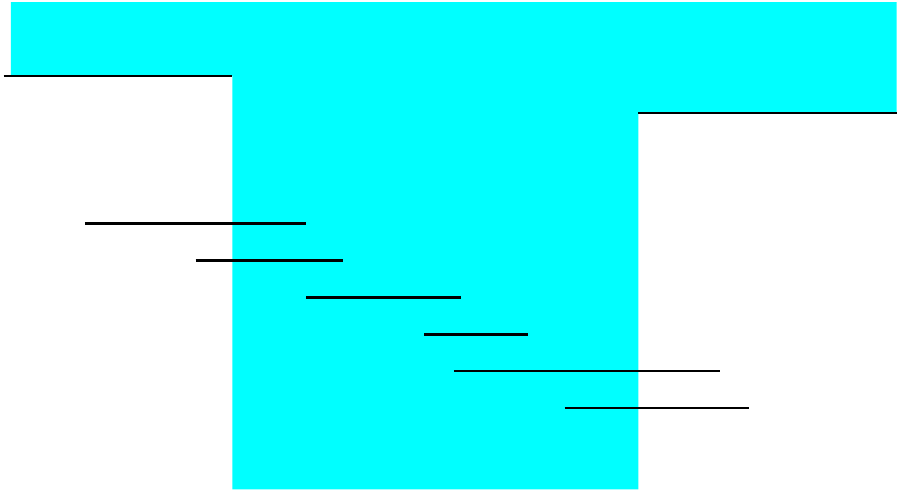_t}
  \caption{The subinstance ${\cal I}_{i,k}$.}
  \label{fig:agreeable}
\end{center}
\end{figure}

For convenience define $b_0 = a_1$ and $a_{n+1}=b_n$. For any pair
of indices $0\leq i,k \leq n+1$ we define the subinstance ${\cal
I}_{i,k} := \{ I_j \cap [b_i,a_k) : i<j<k\}$. Notice that if $b_i
\geq a_k$, then ${\cal I}_{i,k}$ consists of $k-i-1$ intervals of
zero length. Let $\mathcal{C}(i,k)$ be the cost of an optimum
solution to $\big(\mathcal{I}_{i,k},f\big)$, with
$\mathcal{C}(i,k) = 0$ if ${\cal I}_{i,k} = \emptyset$. We have
the following recursion.
\begin{lemma}\label{non-agreeable}
For $0\leq i<k\leq n+1$ we have $\mathcal{C}(i,k)=0$ in case
$i+1=k$, and otherwise
\[
\mathcal{C}(i,k) = \min_{i<j<k} \left\{\mathcal{C}(i,j) + f(|I_j
\cap [b_i,a_k)|) + \mathcal{C}(j,k) \right\}.
\]
\end{lemma}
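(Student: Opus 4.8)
The plan is to verify the recursion by exhibiting, for any ordering of $\mathcal{I}_{i,k}$, a decomposition matching the right-hand side, and conversely. The base case $i+1=k$ is immediate since $\mathcal{I}_{i,k}=\emptyset$, so $\mathcal{C}(i,k)=0$ by definition. For the main case, fix $i<k$ with $i+1<k$, and observe that the intervals of $\mathcal{I}_{i,k}$ are $J_j := I_j\cap[b_i,a_k)$ for $i<j<k$, and since $\mathcal{I}$ is agreeable these are themselves agreeable, indexed in the same order.

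First I would prove the inequality $\mathcal{C}(i,k)\le \min_{i<j<k}\{\cdots\}$. Take any $j$ with $i<j<k$, take an optimal ordering of $\mathcal{I}_{i,j}$ and an optimal ordering of $\mathcal{I}_{j,k}$, and build an ordering of $\mathcal{I}_{i,k}$ that places $J_j$ first, then the intervals with index less than $j$ (in the order given by the optimal solution for $\mathcal{I}_{i,j}$), then the intervals with index greater than $j$ (in the order given by the optimal solution for $\mathcal{I}_{j,k}$). The key observation is that for an interval $J_h$ with $h<j$, its exposed part relative to the intervals placed before it equals $I_h\cap[b_i,a_j)$ minus the earlier intervals among $\{J_{i+1},\dots,J_{j-1}\}$: because $b_h\le b_{j-1}<b_j$ and all intervals start at $a_h\le a_{j-1}$, the part of $J_h$ lying in $[a_j,a_k)$ is empty once we intersect with $[b_i,a_k)$ if $b_h\le a_j$; more carefully, since the intervals are agreeable, $J_j$ being placed first covers exactly $[b_i,a_k)\cap I_j$, and for $h<j$ the interval $J_h=[a_h,\min(b_h,a_k))\cap[b_i,\ldots)$ has its portion to the right of $a_j$ entirely contained in $I_j$, so it is already covered; hence the exposed part of $J_h$ is the same as in the subinstance $\mathcal{I}_{i,j}$. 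Symmetrically for $h>j$. Summing the costs gives exactly $\mathcal{C}(i,j)+f(|J_j|)+\mathcal{C}(j,k)$.

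Next I would prove the reverse inequality. Take an optimal ordering $\alpha$ of $\mathcal{I}_{i,k}$, and let $j$ be the index of the interval placed first. When $J_j$ is first, its cost is $f(|J_j|)=f(|I_j\cap[b_i,a_k)|)$. Now split the remaining intervals into those with index $<j$ and those with index $>j$. The crucial claim, again using agreeability, is that in the ordering $\alpha$ the exposed part of any later interval $J_h$ with $h<j$ is unaffected by any interval with index $>j$ that precedes it, and vice versa: because $J_j$ (placed first) already covers the overlap region, an interval with index $>j$ lies to the right of $a_j$ (after intersecting past $J_j$) and an interval with index $<j$ lies to the left, so they interact only through $J_j$ which is common to both. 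Therefore the subsequence of $\alpha$ restricted to indices $<j$ is a feasible ordering of $\mathcal{I}_{i,j}$ with cost equal to its contribution in $\alpha$, hence at least $\mathcal{C}(i,j)$; similarly the restriction to indices $>j$ contributes at least $\mathcal{C}(j,k)$. Adding up, $\mathcal{C}(i,k)\ge\mathcal{C}(i,j)+f(|J_j|)+\mathcal{C}(j,k)\ge\min_{i<j<k}\{\cdots\}$.

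The main obstacle is the geometric decoupling claim: that once $J_j$ is placed, the intervals indexed below $j$ and those indexed above $j$ never cover each other, so the problem genuinely splits. This needs the agreeable property in the form that $J_h\setminus J_j\subseteq[b_i,a_j)$ for $h<j$ and $J_h\setminus J_j\subseteq[a_j,a_k)$ for $h>j$ — i.e., that $a_j$ is a clean cut point after removing $I_j$. I would establish this by a short case analysis on the endpoints $a_h,b_h$ versus $a_j,b_j$, using $a_h\le a_j\le b_h$ (the overlap assumption, or else zero-length intervals for which the claim is trivial) and the sorted order of both endpoint sequences. Everything else is bookkeeping with the definition of the cost and the fact that $\mathcal{I}_{i,j}$ and $\mathcal{I}_{j,k}$ are the correctly-clipped subinstances of the ones just described.
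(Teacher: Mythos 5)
Your proof is correct and takes the same route as the paper: the first interval $J_j$ splits the instance into $\mathcal{I}_{i,j}$ and $\mathcal{I}_{j,k}$, with agreeability guaranteeing that these two sub-instances are geometrically decoupled (for $h<j$, $J_h\setminus J_j\subseteq[b_i,a_j)$, while for $h>j$, $J_h\subseteq[a_j,a_k)$). The paper simply asserts this decomposition in one line; you spell out both inequalities and the endpoint case analysis behind the ``clean cut'' at $a_j$.
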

\begin{proof}
The case $i+1=k$ follows from $\mathcal{I}_{i,i+1}=\{\}$ and the
remaining case follows from the fact that (1) some interval $I_j$
has to be selected first, and (2) after selecting that interval
the problem decomposes into two unrelated instances, ${\cal
I}_{i,j}$ and ${\cal I}_{j,k}$, each being agreeable.
\end{proof}

\begin{theorem}
The interval ordering problem $\big(\mathcal{I},f\big)$ with
$\mathcal{I}$ agreeable and $f$ arbitrary, can be solved in
$O(n^3)$.
\end{theorem}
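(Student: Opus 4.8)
The plan is to turn the recursion of Lemma~\ref{non-agreeable} into a bottom-up dynamic program over the $O(n^2)$ subinstances $\mathcal{I}_{i,k}$, indexed by pairs $0 \le i < k \le n+1$. First I would precompute, for every pair $(i,k)$, the length $|I_j \cap [b_i, a_k)|$ for each $j$ with $i < j < k$; since $I_j \cap [b_i, a_k) = [\max(a_j, b_i),\, \min(b_j, a_k))$, each such length is $\max\{0,\, \min(b_j, a_k) - \max(a_j, b_i)\}$ and can be evaluated in constant time once the endpoints are available. Evaluating $f$ on these $O(n^3)$ length values costs $O(n^3)$ in total (treating each call to $f$ as a constant-time operation, consistent with how the problem is posed).

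Next I would fill in the table $\mathcal{C}(i,k)$ in order of increasing $k - i$. The base cases $\mathcal{C}(i,i+1) = 0$ come directly from $\mathcal{I}_{i,i+1} = \emptyset$. For $k - i \ge 2$, the value $\mathcal{C}(i,k)$ is computed by the minimization in Lemma~\ref{non-agreeable}, which ranges over the $k - i - 1 \le n$ candidate choices of the first interval $j$; each candidate requires a constant-time lookup of the already-computed values $\mathcal{C}(i,j)$ and $\mathcal{C}(j,k)$ (both have a strictly smaller index gap, hence are available) plus the precomputed term $f(|I_j \cap [b_i, a_k)|)$. So each of the $O(n^2)$ table entries costs $O(n)$ time, giving $O(n^3)$ overall. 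The answer to the original instance is $\mathcal{C}(0, n+1)$, since $\mathcal{I}_{0,n+1} = \{ I_j \cap [b_0, a_{n+1}) : 0 < j < n+1 \} = \{ I_j \cap [a_1, b_n) : 1 \le j \le n \}$, and because every $I_j \subseteq [a_1, b_n)$ by the agreeability ordering $a_1 \le \cdots \le a_n$ and $b_1 \le \cdots \le b_n$, this is exactly $\mathcal{I}$ itself, so $\mathcal{C}(0,n+1)$ is the optimum cost of $(\mathcal{I}, f)$.

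The one point that needs a word of care — and it is more a matter of exposition than a genuine obstacle — is verifying that the subinstances appearing on the right-hand side of the recursion really are the ones indexed by $(i,j)$ and $(j,k)$ in the sense defined before the lemma, i.e.\ that after removing $I_j$ from $\mathcal{I}_{i,k}$ and accounting for the part of each remaining interval already covered, the two resulting problems are precisely $(\mathcal{I}_{i,j}, f)$ and $(\mathcal{I}_{j,k}, f)$; but this is exactly what Lemma~\ref{non-agreeable} (which we are entitled to assume) already asserts. Given the lemma, the theorem is just the observation that the recursion is acyclic with respect to the partial order by index gap, so a standard memoized / bottom-up evaluation applies, and the running time is the product of the number of states $O(n^2)$ and the per-state work $O(n)$.
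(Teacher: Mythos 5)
Your proposal is correct and follows the same approach as the paper: the paper's proof is a one-line observation that Lemma~\ref{non-agreeable} gives a dynamic program with $O(n^2)$ states $\mathcal{C}(i,k)$, each computable in $O(n)$ time by the minimization over $j$. You have simply spelled out the bookkeeping (ordering by index gap, precomputing the lengths $|I_j \cap [b_i,a_k)|$, reading off the answer as $\mathcal{C}(0,n+1)$) that the paper leaves implicit.
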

\begin{proof}
Lemma~\ref{non-agreeable} leads to a dynamic programming algorithm
with $O(n^2)$ variables, each computable in linear time.
\end{proof}

\subsubsection{Continuous and convex cost function} \nopagebreak
In this subsection, we still assume that the intervals in
$\mathcal{I}$ are agreeable, but we consider the cost function $f$
to be continuous and convex. Recall that a function $f$ defined on
a convex set ${\bf dom}(f)$ is {\it convex} when $f(\lambda x + (1
- \lambda)y) \leq \lambda f(x) + (1 - \lambda)f(y)$ for all $x, y
\in {\bf dom}(f)$, and $0 \leq \lambda \leq 1$. We need the
following result, due to Karamata~\cite{kara} (see also pages
$30$--$32$ in Beckenbach and Bellman~\cite{b+b}).
\begin{lemma}\label{kara}
Given $2q+2$  numbers $\{x_k, y_k\}$, $k=0,1,\ldots,q$ satisfying:
\begin{itemize}
\item $x_0 \geq x_1 \geq \ldots \geq x_q$, and $y_0 \geq y_1 \geq
\ldots \geq y_q$, \item for each $k = 0, 1, \ldots, q-1$:
$\sum_{i=0}^k x_i \geq \sum_{i=0}^k y_i$, and \item $\sum_{i=0}^q
x_i = \sum_{i=0}^q y_i$,
\end{itemize}
then, for any continuous, convex function $f$ we have:
\begin{equation}
\sum_{i=0}^q f(x_i) \geq \sum_{i=0}^q f(y_i).
\end{equation}
\end{lemma}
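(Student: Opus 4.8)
The plan is to prove Karamata's inequality by a supporting-line argument combined with summation by parts (Abel summation); this is the standard route to majorization inequalities, and it works verbatim for any continuous convex $f$ (here defined on all of ${\mathbb R}$, so there are no endpoint subtleties). First I would record two elementary facts about such an $f$: (a) its right derivative $f'_+$ exists everywhere and is non-decreasing; and (b) the supporting-line inequality $f(t) \ge f(s) + f'_+(s)\,(t - s)$ holds for all $s, t \in {\mathbb R}$. Then set $c_i := f'_+(y_i)$ for $i = 0, 1, \ldots, q$. Because $y_0 \ge y_1 \ge \cdots \ge y_q$, monotonicity of $f'_+$ gives $c_0 \ge c_1 \ge \cdots \ge c_q$.

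Next, apply (b) with $s = y_i$, $t = x_i$ to get $f(x_i) - f(y_i) \ge c_i\,(x_i - y_i)$ for every $i$, and sum:
\[
\sum_{i=0}^q f(x_i) - \sum_{i=0}^q f(y_i) \ \ge\ \sum_{i=0}^q c_i\,(x_i - y_i).
\]
It then remains to show the right-hand side is non-negative. Introduce the partial sums $S_k := \sum_{i=0}^k (x_i - y_i)$; the hypotheses say precisely that $S_k \ge 0$ for $0 \le k \le q - 1$ and $S_q = 0$. Abel summation gives
\[
\sum_{i=0}^q c_i\,(x_i - y_i) \ =\ \sum_{i=0}^{q-1} (c_i - c_{i+1})\,S_i \ +\ c_q\,S_q \ =\ \sum_{i=0}^{q-1} (c_i - c_{i+1})\,S_i ,
\]
and since each factor $c_i - c_{i+1} \ge 0$ and each $S_i \ge 0$, every summand is non-negative. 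This yields $\sum_{i=0}^q f(x_i) \ge \sum_{i=0}^q f(y_i)$, as claimed.

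The only delicate point is fact (b), the supporting-line inequality at a possibly non-smooth point $y_i$; I expect this to be the one step that needs care, since the rest is bookkeeping. It follows from convexity via difference quotients: for $t > s$ the map $u \mapsto \frac{f(u) - f(s)}{u - s}$ is non-decreasing, hence bounded below by its limit $f'_+(s)$ as $u \downarrow s$, giving $f(t) - f(s) \ge f'_+(s)(t-s)$; for $t < s$ one argues symmetrically, using $f'_-(s) \le f'_+(s)$. If one prefers to avoid one-sided derivatives entirely, an alternative is a ``Robin Hood transfer'' argument: show that $y$ is reachable from $x$ by finitely many moves that each decrease the current largest coordinate and increase a strictly smaller coordinate by the same amount while keeping the sequence sorted, and verify directly from the definition of convexity (applied to the two coordinates involved in a move) that each move does not increase $\sum_i f(\cdot)$. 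I would keep the Abel-summation proof as the primary one, since it is shorter and makes the role of all three hypotheses transparent.
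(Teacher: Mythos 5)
The paper does not actually prove this lemma: it records it as a known result of Karamata, with citations to \cite{kara} and \cite{b+b}, and then uses it as a black box. So there is no in-paper argument to compare against. That said, your proof is correct, and it is the standard textbook argument for the Karamata (Hardy--Littlewood--P\'olya) majorization inequality: take a supporting line of slope $c_i = f'_+(y_i)$ at each $y_i$, use $f(x_i) - f(y_i) \geq c_i(x_i - y_i)$, and then apply Abel summation with the partial sums $S_k = \sum_{i\le k}(x_i - y_i)$, exploiting that $c_i$ is non-increasing (because $y$ is sorted and $f'_+$ is non-decreasing) while $S_k \ge 0$ for $k<q$ and $S_q = 0$. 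The Abel identity $\sum_{i=0}^q c_i(x_i-y_i) = \sum_{i=0}^{q-1}(c_i-c_{i+1})S_i + c_qS_q$ is applied correctly. You were right that the only delicate point is the supporting-line inequality at a possibly non-differentiable $y_i$, and your handling (via monotonicity of difference quotients, with $f'_- \le f'_+$ for the $t<s$ case) is sound --- for a finite convex function on all of $\mathbb{R}$ the one-sided derivatives exist and $f'_+$ is non-decreasing, so there are no boundary issues. One small observation worth noting: your proof never uses the hypothesis that the $x_i$ are sorted, only that the $y_i$ are; that hypothesis is part of the classical majorization formulation but is not needed for the Abel-summation route, so your argument in fact proves a slightly more general statement than the one quoted.
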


Let $\big(\mathcal{I},f\big)$ be an instance of the interval
ordering problem where $\mathcal{I}$ is agreeable and contains $n$
intervals $I_i = [a_i, b_i)$, $i=1, \ldots, n$ and $f$ is
continuous and convex. For a given solution to
$\big(\mathcal{I},f\big)$ (i.e., a sequence of intervals), we call
an interval $I_i$ an {\it $E$-interval} if $a_i$ is contained in
the exposed part of interval $I_i$ relative to the set of
intervals sequenced before $I_i$ (in that solution). Given an
integer $k$, $1 \leq k \leq n$, let $\mathcal{I}_k$ be the set
containing the intervals $I_i = [a_i, b_i)$ for $i=k, \ldots, n$
and let $\mathcal{C}_k$ be the value of an optimal solution to the
instance $\big(\mathcal{I}_k,f\big)$. Notice that this definition
implies that $\mathcal{I} =\mathcal{I}_1$. Further, interval $I_k$
is an $E$-interval in any feasible solution to
$\big(\mathcal{I}_k,f\big)$.

\begin{lemma}\label{lemma_2}
Let $\big(\mathcal{I},f\big)$ be an instance of the interval
ordering problem with $\mathcal{I}$ agreeable and $f$ continuous
and convex; and let $\mathcal{I}_k$ defined as above. \\
If, in an optimal solution to $\big(\mathcal{I}_k,f\big)$,
interval $I_k$ is the only $E$-interval, then
\begin{eqnarray} \label{Ok:basis}
\mathcal{C}_k = f\big(b_k - a_k\big) + \sum_{i = k + 1}^{n}
f\big(b_i - b_{i - 1}\big).
\end{eqnarray}
Otherwise, in this optimal solution to
$\big(\mathcal{I}_k,f\big)$, let $I_j$ with $j
> k$ be the first $E$-interval, i.e., the $E$-interval with
minimal $a_j$.\\ If $a_j \leq b_k$ then $j = k + 1$ and
\begin{eqnarray} \label{Ok:induction}
\mathcal{C}_k = f\big(a_{k + 1} - a_k\big) + \mathcal{C}_{k + 1}.
\end{eqnarray}
If $a_j > b_k$, then
\begin{eqnarray}
\mathcal{C}_k = f\big(b_k - a_k\big) + \sum_{i = k + 1}^{\ell}
f\big(b_i - b_{i - 1}\big) + f\big(a_j - b_{\ell}\big) + \big(j -
\ell - 1\big) f(0) + \mathcal{C}_j,
\end{eqnarray}
where $I_{\ell}$ is the latest interval in $\mathcal{I}_k$ that
satisfies $b_{\ell} < a_j$.
\end{lemma}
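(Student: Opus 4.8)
The plan is to fix an optimal ordering $\pi$ of $\mathcal{I}_k$, read off its combinatorial shape from the positions of the $E$-intervals, and then in each of the three cases bracket the cost of $\pi$ between the claimed closed form (lower bound, via Karamata's inequality, Lemma~\ref{kara}) and the cost of an explicitly constructed ordering (upper bound). First I would record the geometry of agreeable families: for an interval $I_i$ and a set $\mathcal{S}$ of agreeable intervals, the members of $\mathcal{S}$ meeting $I_i$ with left endpoint at most $a_i$ cover a prefix $[a_i,c)$ of $I_i$, those with left endpoint at least $a_i$ cover a suffix $[d,b_i)$, and so the exposed part of $I_i$ is the interval $[c,d)$ (empty when $c\ge d$). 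Hence $I_i$ is an $E$-interval in $\pi$ exactly when no interval placed before it has left endpoint at most $a_i$ and right endpoint exceeding $a_i$; in particular $I_k$, having the smallest left endpoint in $\mathcal{I}_k$, is an $E$-interval in every ordering, as asserted. Ties among the $a_i$ can be removed by an infinitesimal perturbation of the endpoints: the cost of a fixed ordering depends continuously on the endpoints because $f$ is continuous, so it suffices to argue for generic instances; alternatively one breaks ties by index.

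Next I would extract the shape of $\pi$. Let $I_j$ be the $E$-interval of $\pi$ with the smallest left endpoint among those of index $>k$; if there is none, we are in the first case. Monotonicity of the $a_i$ forces $I_{k+1},\dots,I_{j-1}$ to be non-$E$-intervals of $\pi$. The facts I would establish are: (i) any $I_i$ with $k\le i<j$ and $b_i>a_j$ must come after $I_j$ in $\pi$, for otherwise it covers $a_j$ before $I_j$ is inserted and $I_j$ is not an $E$-interval; writing $\ell$ for the largest index with $b_\ell<a_j$, all of $I_{\ell+1},\dots,I_{j-1}$ hence come after $I_j$; (ii) tracing the precedence constraints (each non-$E$-interval $I_i$ needs some earlier interval with left endpoint at most $a_i$ to cover $a_i$, and the only candidates have smaller index) pins $I_k,I_{k+1},\dots,I_{\min(\ell,j-1)}$ down to appearing, in this relative order, before any other interval of comparably small left endpoint, which in the case $a_j>b_k$ already forces $I_k$'s exposed part to be all of $[a_k,b_k)$; (iii) the exposed parts of $I_j,\dots,I_n$ in $\pi$ equal their exposed parts in the restriction of $\pi$ to $\mathcal{I}_j$, since any $I_i$ with $i<j$ that could cover a point $\ge a_j$ has $b_i>a_j$, hence follows $I_j$ by (i), and that point is already covered by $I_j$; so $I_j,\dots,I_n$ contribute the cost of a feasible ordering of $\mathcal{I}_j$, at least $\mathcal{C}_j$; and dually, when $a_j>b_k$, the exposed parts of $I_k,\dots,I_{j-1}$ all lie in $[a_k,a_j)$ and partition it.

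Finally I would handle the three cases. If $I_k$ is the unique $E$-interval of $\pi$, then the first interval of $\pi$ must be $I_k$ (any other first interval would be an $E$-interval of index $>k$), so $I_k$ contributes $f(b_k-a_k)$, and the remaining intervals expose a partition of $[b_k,b_n)$ whose length vector majorizes $(b_{k+1}-b_k,\dots,b_n-b_{n-1})$; Lemma~\ref{kara} then gives that their cost is at least $\sum_{i=k+1}^n f(b_i-b_{i-1})$, attained by the left-to-right ordering, which proves \eqref{Ok:basis}. If $a_j\le b_k$, then $I_k$ covers $a_j$ and every $a_i$ with $k<i<j$; were $j>k+1$, the constraints of (ii) would force $I_j$ to precede $I_k$ which precedes $I_{k+1}$, and moving the block $I_k,\dots,I_{j-1}$ ahead of $I_j$ would strictly decrease the cost (the exposed part of $I_k$ shrinks from $[a_k,a_j)$ to $[a_k,a_{k+1})$ while no other exposed length grows), contradicting optimality; hence $j=k+1$, $I_k$ is inserted after $I_{k+1}$, its exposed part equals $[a_k,a_{k+1})$, and the rest is a feasible $\mathcal{I}_{k+1}$-ordering, so $\mathcal{C}_k\ge f(a_{k+1}-a_k)+\mathcal{C}_{k+1}$, with equality obtained by appending $I_k$ to an optimal ordering of $\mathcal{I}_{k+1}$; this is \eqref{Ok:induction}. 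If $a_j>b_k$, then by (iii) the exposed lengths of $I_k,\dots,I_{j-1}$ partition $[a_k,a_j)$, with $I_k$'s part equal to $[a_k,b_k)$ and, for $k<i<j$, the $i$-th part contained in $[a_i,\min(b_i,a_j))$ and avoiding $a_i$; the rigidity in (ii) shows the only admissible cut points of this partition are $b_k,b_{k+1},\dots,b_\ell$, so the staircase length vector $\big(b_k-a_k,\,b_{k+1}-b_k,\,\dots,\,b_\ell-b_{\ell-1},\,a_j-b_\ell,\,0,\dots,0\big)$ has componentwise smallest decreasing partial sums among all feasible partitions, and Lemma~\ref{kara} bounds the cost of $I_k,\dots,I_{j-1}$ from below by the cost of this vector; with (iii) this is the stated lower bound on $\mathcal{C}_k$. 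For the matching upper bound I would order the intervals as $I_k,I_{k+1},\dots,I_\ell$, then an optimal ordering of $\mathcal{I}_j$, then $I_{\ell+1}$, then $I_{\ell+2},\dots,I_{j-1}$, and check termwise that the exposed lengths are $b_k-a_k$, the gaps $b_i-b_{i-1}$, then $a_j-b_\ell$, then $j-\ell-2$ zeros, and that the embedded block contributes $\mathcal{C}_j$.

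I expect the main obstacle to be the rigidity statement of (ii) as it feeds Karamata in the last case: it does not suffice that each part of the partition of $[a_k,a_j)$ sits inside $[a_i,\min(b_i,a_j))$ and avoids $a_i$; one must rule out cutting $[a_k,a_j)$ at arbitrary interior points, since otherwise the more balanced partitions favoured by convexity would not be excluded and the claimed value could fail to be optimal. The second delicate point is the exchange argument forcing $j=k+1$ when $a_j\le b_k$, along with the boundary bookkeeping (the cases $\ell=k$, $a_j=b_k$, empty index ranges, and the exact count of the $f(0)$ terms).
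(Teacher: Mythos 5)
Your plan matches the paper's: Karamata's inequality for the base case, exchange arguments to fix the shape of the optimal sequence, and explicit constructions for the upper bounds. In the case $a_j > b_k$ you apply Karamata a second time to the induced partition of $[a_k,a_j)$, whereas the paper instead reduces to the already-proved base case by noting that $I_k,\ldots,I_\ell$ form an agreeable sub-instance with $I_k$ as the sole $E$-interval; these are the same idea executed slightly differently.

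The genuine gap is the exchange argument forcing $j = k+1$ when $a_j \le b_k$ (a case the paper's own proof also does not cleanly cover, since it defines $I_\ell$ only when $a_j > b_k$). Your claim that moving the block $I_k,\ldots,I_{j-1}$ ahead of $I_j$ shrinks $I_k$'s exposed part to $[a_k,a_{k+1})$ ``while no other exposed length grows'' is false: to shrink $I_k$'s exposed part to $[a_k,a_{k+1})$ you must place $I_{k+1}$ before $I_k$, and then $I_{k+1}$'s exposed length jumps from $0$ to $b_{k+1}-a_{k+1}$; and if you instead keep $I_k$ first in the moved block, its exposed part becomes $[a_k,b_k)$, which is \emph{larger} than $[a_k,a_j)$ because $a_j \le b_k$. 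A correct argument must actually use convexity. For instance: the ordering ``optimal $\mathcal{I}_j$, then $I_{k+1},\ldots,I_{j-1}$'' is feasible for $\mathcal{I}_{k+1}$ and shows $\mathcal{C}_{k+1} \le \mathcal{C}_j + f(a_j-a_{k+1}) + (j-k-2)f(0)$; combined with $f(a_j-a_{k+1}) + f(a_{k+1}-a_k) \le f(a_j-a_k) + f(0)$, the super-additivity of $g(t)=f(t)-f(0)$ that convexity guarantees, this yields $\mathcal{C}_{k+1} + f(a_{k+1}-a_k) \le \mathcal{C}_j + f(a_j-a_k) + (j-k-1)f(0)$, where the right-hand side is the cost of any ordering whose first $E$-interval past $I_k$ is $I_j$, so the first $E$-interval may as well be taken to be $I_{k+1}$. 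One further point: your tally of $j-\ell-2$ zero-length exposed parts in the last case is the correct count, but it does not agree with the $(j-\ell-1)f(0)$ printed in the lemma; rather than silently asserting a match, you should flag that the printed coefficient appears to be off by one.
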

\begin{proof}
We will show that if interval $I_k$ is the only $E$-interval, then
the optimal sequence to $\big(\mathcal{I}_k,f\big)$ is simply $(k,
k+1, \ldots, n)$. Otherwise, if there is another $E$-interval
$I_j$, where $I_j$ is the first $E$-interval with $j>k$, then the
optimal sequence to $\big(\mathcal{I}_k,f\big)$ is the sequence of
the solution to $\big(\mathcal{I}_j,f\big)$ followed by $k, k+1,
\ldots, j-1$. See Figure~\ref{fig:convex} for illustration.

\begin{figure}[bht]
\begin{center}
\input{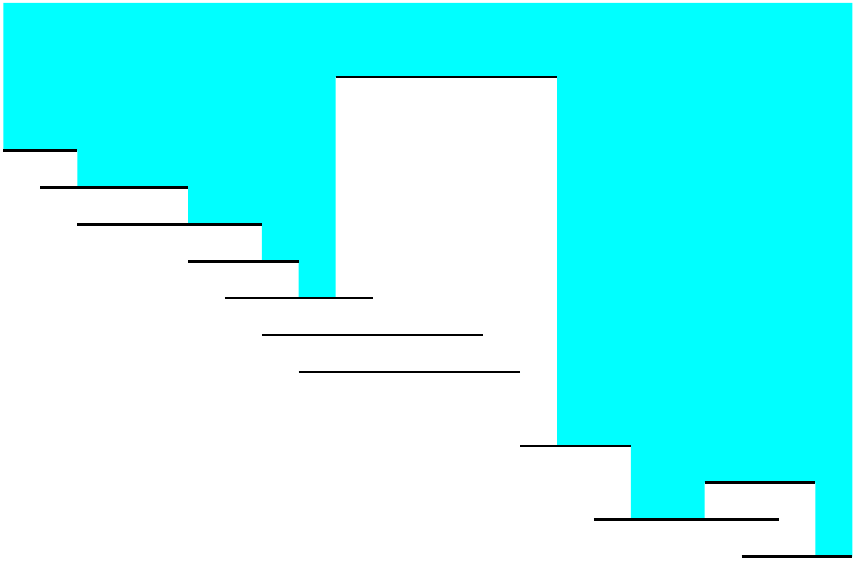_t} \caption{Recurrence relation of
  $\mathcal{C}_k$. If $I_j$ is the first E-interval
  after $I_k$, then the cost divides into the cost of the intervals
  between $k$ and $j$, plus $\mathcal{C}_j$}\label{fig:convex}
\end{center}
\end{figure}

\paragraph{Case 1: Interval $I_k$ is the only $E$-interval.}
We show that $\alpha_0 = (k, k+1, \ldots, n)$ is an optimal
sequence to $\big(\mathcal{I}_k,f\big)$. The sequence $\alpha_0$
partitions $[a_k, b_n)$ into $n - k + 1$ nonempty segments,
defined by $S_0 = [a_k,b_k)$, $S_i = [b_{k+i-1}, b_{k+i})$ for
$i=1, \ldots, n-k$ (this is true since the intervals are
agreeable). Let $\sigma$ be a permutation of $\{0,1,\ldots,n-k\}$
such that $|S_{\sigma(i)}| \geq |S_{\sigma(i+1)}|$ for
$i=0,1,\ldots,n-k-1$; the permutation $\sigma$ orders the segments
induced by $\alpha_0$ in non-increasing length. Now, let $\alpha$
be some sequence of intervals ($\alpha \neq \alpha_0$) which does
not feature another $E$-interval apart from interval $I_k$.
Clearly, $\alpha$ partitions $[a_k, b_n)$ into less than $n-k+1$
nonempty segments, each segment being defined by a pair from the
set $\{a_k, b_k, b_{k+1}, \ldots, b_n\}$ (indeed, notice that the
only way to have $n-k+1$ segments is when $\alpha = \alpha_0$).
Let us suppose that $\alpha$ partitions $[a_k,b_n)$ into $p+1$
segments ($1\leq p \leq n-k-1$) $S'_{0},\ldots,S'_{p}$ satisfying
$|S'_{0}| \geq |S'_{1}| \geq \ldots \geq |S'_{p}|$. For
convenience set $S'_{p+1}=\ldots=S'_m=\{\}$.
\paragraph{Observation:} Any segment $S'_i$ ($i =0,1,\ldots, p$) is
either identical to a segment $S_j$ for a given $j\in
\{0,1,\ldots,n-k\}$ or is a union of consecutive intervals $S_j$.

This observation follows from the fact that the segments are
defined by points in the set $\{a_k$ $b_k$, $b_{k+1}$, \ldots,
$b_n\}$. We will use this observation to argue that for each
$m=0,1,\ldots,n-k+1$:
\begin{equation}\label{ineqagreeable}
\sum_{i=0}^{m}|S'_i| \geq\sum_{i=0}^{m}|S_{\sigma(i)}|.
\end{equation}
For any $m$ with $p\leq m \leq n-k$,
we have that $\sum_{i=0}^{m}|S'_i| \geq
\sum_{i=0}^{m}|S_{\sigma(i)}|$. This is because
$\cup_{i=0}^{m}S'_i = [a_k, b_n)$ and $\cup_{i=0}^{m}S_{\sigma(i)}
\subseteq [a_k, b_n)$ and the segments are disjoint.

We now show that \eqref{ineqagreeable} is also true for $m < p$, by
induction on $m$. For the base case, note that the the observation
above immediately implies that $|S'_0| \geq |S_{\sigma(0)}|$. For
the induction step, we assume that $\sum_{i=0}^{m}|S'_i| \geq
\sum_{i=0}^{m}|S_{\sigma(i)}|$. The question now is whether
\begin{equation}\label{ineqproof}
\sum_{i=0}^{m+1}|S'_i| \geq \sum_{i=0}^{m+1}|S_{\sigma(i)}|
\end{equation} is true. Let us consider
$S_{\sigma(m+1)}$. If each $S_{\sigma(r)}$ with $r \leq m$ is
contained in the left-hand side of \eqref{ineqproof}, then, using
the induction hypothesis $\left(\sum_{i=0}^{m}|S'_i| \geq
\sum_{i=0}^{m}|S_{\sigma(i)}|\right)$, the validity of
\eqref{ineqproof} follows. Indeed, if $S_{\sigma(m+1)}$ is also
contained in the left-hand side of \eqref{ineqproof}, the
inequality is certainly valid, else we know that $S'_{m+1} \geq
S_{\sigma(m+1)}$. If there exists an $S_{\sigma(r)}$ with $r \leq
m$ not contained in the left-hand side of \eqref{ineqproof}, then:
$S'_{m+1} \geq S_{\sigma(r)} \geq S_{\sigma(m+1)}$ (where the
first inequality holds because the length of a segment
$S_{\sigma(j)}$ not contained in the left-hand side of
\eqref{ineqproof} is a lower bound for $S'_j$). This completes the
proof of \eqref{ineqagreeable}.

We now invoke Lemma~\ref{kara} by setting $q:=n-k+1$, and for
$i=0,1,\ldots,n-k+1$ we set $x_i:=|S'_i|$, $y_i:=|S_i|$. Clearly,
the arguments given above imply that the conditions of
Lemma~\ref{kara} are satisfied. Hence, when $f$ is continuous and
convex, the cost of $\alpha$ is greater than or equal to the cost
of $\alpha_0$.

\paragraph{Case 2: There is another $E$-interval $I_j$, where $I_j$
is the first $E$-interval after $I_k$.}
For this case, we use the following observation. Let $I_p$ and
$I_q$ be two consecutive intervals in a solution, and suppose that
they are disjoint. Then it does not matter for the cost of the
solution whether $I_p$ or $I_q$ is processed first of the two. Now
since $I_j$ is an $E$-interval, it must be processed before all
intervals $I_i$ that contain $a_j$ (otherwise $I_j$ is not an
$E$-interval), and it can be processed before all intervals $I_i$
with $b_i < a_j$. Thus, we conclude that $I_j$ is processed before
intervals indexed by $k, k + 1, \ldots, j - 1$.  Since the
intervals are agreeable, the exposed parts (after processing
$I_j$) of the intervals before $I_j$ are disjoint with the
intervals with index greater than $j$. Therefore we can assume
that the intervals with index $k, \ldots, j - 1$ are processed
after the intervals with index $j, \ldots, n$. And of course, the
latter intervals are processed optimally by a sequence of the
solution to $\big(\mathcal{I}_j,f\big)$. Let $I_{\ell}$ with $\ell
< j$ be the latest interval that does not intersect interval
$I_j$. Notice that by the choice of $j$, the optimal sequence of the intervals
$I_k$,\ldots,$I_{\ell}$ contains only one $E$-interval, namely
$I_k$. Hence, that optimal sequence has a cost of  $f (b_k - a_k)
+ \sum_{i = k + 1}^{\ell} f (b_i - b_{i - 1})$.  Finally, we need
to take into account the intervals $I_{\ell+1}$, \ldots,
$I_{j-1}$. Thus, we incur $f (a_j - b_{\ell})$ for the exposed
part between $b_{\ell}$ and $a_j$, corresponding to interval
$I_{\ell+1}$, and we incur a cost of $f(0)$ for each of the
remaining intervals. Notice that all intervals $I_{\ell+1}$,
\ldots, $I_{j-1}$ are completely covered in this sequence. This
completes the proof of this lemma.
\end{proof}
\begin{theorem}
The interval ordering problem $\big(\mathcal{I},f\big)$ with
$\mathcal{I}$ agreeable and $f$ convex and continuous, can be solved
in $O(n^2)$.
\end{theorem}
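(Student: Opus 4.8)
The plan is to turn Lemma~\ref{lemma_2} into a dynamic program for the values $\mathcal{C}_1,\ldots,\mathcal{C}_n$, evaluated in decreasing order of $k$ starting from the base value $\mathcal{C}_n=f(b_n-a_n)$, and then to check that with suitable precomputation each value is obtained in $O(n)$ time.

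First I would turn the case distinction of Lemma~\ref{lemma_2} into an honest recurrence. Write $A_k$ for the right-hand side of~\eqref{Ok:basis}; write $B_k$ for the right-hand side of~\eqref{Ok:induction}, defined whenever $k<n$ (under the standing assumption $a_{k+1}<b_k$ it is always available); and for each index $j$ with $k<j\le n$ and $a_j>b_k$, let $C_k(j)$ be the expression in the third case of Lemma~\ref{lemma_2}, with $I_\ell$ the latest interval having $b_\ell<a_j$. Then
\[
\mathcal{C}_k=\min\Bigl\{\,A_k,\ B_k,\ \min_{\,k<j\le n,\ a_j>b_k}C_k(j)\,\Bigr\}.
\]
The direction ``$\ge$'' is precisely Lemma~\ref{lemma_2}: since $I_k$ is an $E$-interval in every feasible solution to $\big(\mathcal{I}_k,f\big)$, an optimal one falls under one of the three enumerated structural cases, so $\mathcal{C}_k$ equals one of $A_k$, $B_k$, or some $C_k(j)$, which is at least the displayed minimum. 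For the direction ``$\le$'' it is enough to observe that each of the listed quantities is the cost of an explicit feasible ordering of $\mathcal{I}_k$ with an optimally-solved recursive tail --- the orderings produced in the proof of Lemma~\ref{lemma_2}, but now applied to an \emph{arbitrary} admissible $j$ rather than the one distinguished by an optimal solution: $A_k$ is realised by $(k,k+1,\ldots,n)$; $B_k$ by an optimal ordering of $\mathcal{I}_{k+1}$ followed by $I_k$; and $C_k(j)$ by processing $I_k,\ldots,I_\ell$, then an optimal ordering of $\mathcal{I}_j$, then $I_{\ell+1}$, then $I_{\ell+2},\ldots,I_{j-1}$. Hence the minimum is attained and the recurrence is correct.

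For the running time I would sort the intervals so that $a_1\le\cdots\le a_n$ and $b_1\le\cdots\le b_n$ in $O(n\log n)$ time, and then in $O(n)$ time precompute the values $f(b_i-a_i)$ and $f(a_{i+1}-a_i)$, the prefix sums $P(m):=\sum_{i=2}^{m}f(b_i-b_{i-1})$ for all $m$, and --- by a single two-pointer sweep over the two nondecreasing sequences $(a_j)$ and $(b_i)$ --- the index $\ell(j)$ of the latest interval with $b_{\ell(j)}<a_j$, for every $j$. With these tables, $A_k=f(b_k-a_k)+P(n)-P(k)$ and $B_k$ are $O(1)$ lookups, and every $C_k(j)$ is evaluated in $O(1)$: its partial sum over $i$ equals $P(\ell(j))-P(k)$, the number of fully-covered intervals contributing $f(0)$ is an explicit function of $j$ and $\ell(j)$, and $\mathcal{C}_j$ is already known. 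So each $\mathcal{C}_k$ is a minimum of $O(n)$ quantities each computed in $O(1)$, giving $O(n^2)$ overall; recording the minimizing candidate at each $k$ yields an optimal ordering by unwinding the recursion.

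I expect the only step with genuine content to be the ``$\le$'' direction of the recurrence: one must check that the orderings exhibited in the proof of Lemma~\ref{lemma_2} remain well-defined, and produce the claimed exposed lengths (so that they are genuine feasible solutions of the stated cost), for \emph{every} admissible $j$ and not merely the one arising from an optimal solution. This is straightforward from the facts already used there: processing $I_k,\ldots,I_\ell$ in this order leaves $I_k$ as the unique $E$-interval among them because $a_{i+1}<b_i$; after additionally inserting an optimal ordering of $\mathcal{I}_j$, the interval $I_{\ell+1}$ has exposed part exactly $[b_\ell,a_j)$ by the choice of $\ell$ and by agreeability; and $I_{\ell+2},\ldots,I_{j-1}$ are then fully covered. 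Everything else is the bookkeeping needed to make the $O(1)$ per-candidate evaluation go through.
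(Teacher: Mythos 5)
Your proposal is correct and takes essentially the same route as the paper: a dynamic program computing $\mathcal{C}_n,\ldots,\mathcal{C}_1$ directly from Lemma~\ref{lemma_2}. The paper dispatches the theorem in a single sentence (``$n$ variables, each a minimization over $O(n)$ values''), and you supply exactly the bookkeeping it leaves implicit --- turning the structural case analysis of Lemma~\ref{lemma_2} into an honest minimization, checking that each candidate $j$ yields a feasible ordering (not just the $j$ arising from an optimal solution), and precomputing prefix sums and the indices $\ell(j)$ so that every candidate is evaluated in $O(1)$ time, without which a naive evaluation would give $O(n^3)$.
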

\begin{proof}
The $O(n^2)$-time complexity of the dynamic program following from
Lemma~\ref{lemma_2} (see equations (\ref{Ok:basis}) and
(\ref{Ok:induction})) is explained by the fact that there are $n$
variables and each is a minimization over $O(n)$ values.
\end{proof}

\subsection{Laminar intervals}
\label{sectie_laminar} \nopagebreak Let $\big(\mathcal{I},f\big)$
be an instance of the interval ordering problem where
$\mathcal{I}$ contains $n$ intervals $I_i = [a_i, b_i)$, for $i=1,
2, \ldots, n$. We say that the set $\mathcal{I}$ of intervals is
{\em laminar} if for any two intervals $I_i$ and $I_j$ in
$\mathcal{I}$, either $I_i\cap I_j =\emptyset$ or one is included
in the other. See Figure~\ref{fig:laminar} for an illustration.
\begin{figure}[bht]
\begin{center}
\epsfig{file=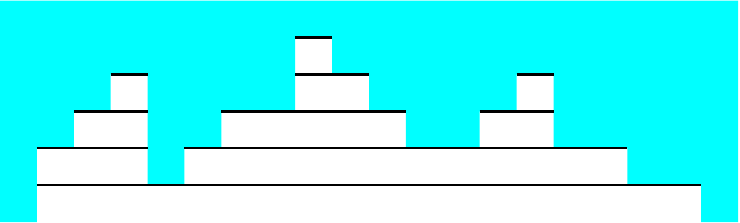,width=6cm} \caption{Illustration of
laminar intervals}\label{fig:laminar}
\end{center}
\end{figure}

An ordering $\alpha$ \emph{respects the inclusions} if for any two
intervals $I_i$ and $I_j$ with $I_i \subsetneq I_j$ we have that
$i$ appears before $j$ in $\alpha$.

\begin{lemma}
Let $\big(\mathcal{I},f\big)$ be an instance of the interval
ordering problem with $\mathcal{I}$ laminar. If the function $g$
defined by $g(x) = f(x)-f(0)$ is super-additive i.e., $g(x+y) \geq
g(x)+g(y)$ then any ordering that respects the inclusions is an
optimal solution to $\big(\mathcal{I},f\big)$.
\end{lemma}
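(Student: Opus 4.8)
The plan is to show that for laminar intervals with $g(x)=f(x)-f(0)$ super-additive, no ordering can do better than one respecting the inclusions. The key structural fact I would exploit is that for a laminar family, the cost of any ordering $\alpha$ can be rewritten vertex-by-vertex along the laminar tree: associate with the laminar family a forest in which $I_i$ is a child of the smallest interval strictly containing it. For a fixed point $x$ on the real line, let $I_{j_1}\supsetneq I_{j_2}\supsetneq\cdots\supsetneq I_{j_t}$ be the chain of intervals containing $x$; then $x$ contributes positively (i.e.\ lies in the exposed part) only for whichever of $I_{j_1},\dots,I_{j_t}$ comes first in $\alpha$. So the total cost equals $\sum_i f(\,|E_\alpha(I_i)|\,)$ where $E_\alpha(I_i)$ is the exposed part of $I_i$, and I want to argue this is minimized when each $I_i$ appears after every interval properly containing it.

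First I would reduce to a single "block": by the disjoint-intervals observation used earlier in the paper, two intervals that are incomparable and disjoint can be reordered freely, so it suffices to analyze one maximal interval $I$ together with all intervals contained in it, i.e.\ one tree of the laminar forest. Second, the heart of the argument is a local exchange: suppose in $\alpha$ some interval $I_j$ appears before an interval $I_i$ with $I_i\subsetneq I_j$; I would pick such a pair with $I_j$ immediately before $I_i$ among the "out of order" pairs and show that moving $I_i$ to just before $I_j$ (or more carefully, swapping their relative order) does not increase the cost. To see the cost is non-increasing I would track how the exposed parts change: pushing the larger interval $I_j$ later can only enlarge $E(I_j)$ and shrink $E(I_i)$ (and leaves all other intervals' exposed parts unchanged, since no interval outside the subtree rooted at $I_j$ is affected, and intervals inside are laminar). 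So we trade a chunk of length, say $\delta$, from $E(I_i)$ over to $E(I_j)$: before the swap the two contribute $f(|E(I_i)|)+f(|E(I_j)|)$, after the swap $f(|E(I_i)|-\delta)+f(|E(I_j)|+\delta)$, and super-additivity of $g$ (equivalently, that $f(x)-f(0)$ is super-additive, which gives $f(a+\delta)+f(b)\ge f(a)+f(b+\delta)$ whenever $a\le b$ — precisely the inequality needed to move mass onto the larger pile) yields that the cost does not increase. Iterating this finite exchange process terminates at an ordering respecting all inclusions, with cost no larger than that of $\alpha$; since $\alpha$ was arbitrary, every inclusion-respecting ordering (they all have the same cost, because inclusion-respecting orderings differ only by swaps of disjoint or nested-already-correct intervals whose exposed parts are then forced) is optimal.

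The main obstacle I expect is making the exchange argument's bookkeeping precise: when $I_j$ is moved later, the region that was $E(I_j)$ may get partially re-covered by intervals processed in between, and the piece transferred to $E(I_i)$ need not be a single interval, so I must argue that only the total exposed \emph{lengths} of $I_i$ and $I_j$ change (each by the same amount $\delta$ in opposite directions) and that $|E(I_i)|\le|E(I_j)|$ after the move — this last point uses laminarity ($I_i\subsetneq I_j$, so the exposed part of the inner interval is contained in that of the outer once $I_j$ precedes $I_i$) together with the super-additive inequality applied in the direction $f(a)+f(b+\delta)\le f(a+\delta)+f(b)$ for $a\le b$. Verifying that super-additivity of $g$ is exactly equivalent to this "mass moves to the larger pile" inequality, and handling the $f(0)$ offsets correctly (since covered intervals still cost $f(0)$ each, and the number of intervals is fixed, these offsets cancel), is the routine-but-delicate part; the conceptual content is entirely in the single local exchange step.
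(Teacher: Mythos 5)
Your overall strategy (reduce to a local exchange that moves the containing interval $I_j$ to after the contained $I_i$, and iterate) is the same as the paper's, but the core of the exchange step is wrong in two ways, and one of them is fatal.

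First, the direction of change is backwards. Before the exchange $I_j$ precedes $I_i\subsetneq I_j$, so the exposed part of $I_i$ has length exactly $0$ (it is fully covered by $I_j$). After moving $I_j$ to after $I_i$, $E(I_j)$ \emph{shrinks} and $E(I_i)$ \emph{grows}; mass is transferred from the big interval to the small one, not (as you wrote) ``from $E(I_i)$ over to $E(I_j)$.'' Second, and more seriously, the inequality you invoke, namely $f(a+\delta)+f(b)\ge f(a)+f(b+\delta)$ for $a\le b$, is \emph{not} a consequence of super-additivity of $g=f-f(0)$, and in fact is false for perfectly good super-additive $g$: take $g(x)=x^2$, $a<b$, $\delta>0$; then $g(a+\delta)+g(b)-g(a)-g(b+\delta)=2\delta(a-b)<0$. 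Super-additivity is a statement about splitting off from $0$, not a statement about transferring mass between two positive piles; the two are not equivalent.

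The proof goes through only because of the special feature you glossed over: before the exchange, $|E(I_i)|$ is exactly $0$. After moving $I_j$ to right after $I_i$, write $y=|E(I_i)|$ and $x=|E(I_j)|$; since $I_i\subseteq I_j$ these two exposed sets are disjoint and their union is what used to be the exposed part of $I_j$, so $x+y=|E_\alpha(I_j)|$. The pair's contribution therefore goes from $f(x+y)+f(0)$ to $f(x)+f(y)$, and the inequality $f(x+y)+f(0)\ge f(x)+f(y)$ is \emph{exactly} super-additivity of $g$ applied once, with no need for a general mass-transfer inequality. Fix the direction of the exchange, observe that $E(I_i)$ is empty beforehand, and drop the unsupported $a\le b$ inequality in favour of applying super-additivity directly in this $a=0$ form; the rest of your outline (iterating, terminating, all inclusion-respecting orders having the same cost) is fine.

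One further bookkeeping caveat, which affects both your write-up and the paper's: moving $I_j$ past intermediate intervals can change their exposed parts too. You need to pick the pair $(i,j)$ so that no interval between $I_j$ and $I_i$ in $\alpha$ is nested with $I_j$ (e.g.\ take $I_j$ to be the latest interval that precedes something it strictly contains, and $I_i$ the first such contained interval after it); then every intermediate interval is disjoint from $I_j$ and its exposed part is unaffected, and the $x+y$ accounting above holds.
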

\begin{proof}
Let $\alpha$ be an arbitrary order of optimal cost. We will show
that there is another order respecting the inclusions and having a
cost not greater than that of $\alpha$.

Suppose that $\alpha$ does not respect the inclusions. Then there
is a pair $i,j$ with $I_i \subsetneq I_j$ and $j$ appears before
$i$ in $\alpha$. Let $\alpha'$ be the result of placing $j$ right
after $i$ in the order $\alpha$. Let $x$ be the length of the
exposed part of $I_j$ in $\alpha'$, and $y$ be the length of the
exposed part of $I_i$ in $\alpha'$. Then $x+y$ is the length of
the exposed part of $I_j$ in $\alpha$. Therefore the contribution
of $I_i$ and $I_j$ to the cost of $\alpha$ is $f(x+y)+f(0)$ while
their contribution to the cost of $\alpha'$ is $f(x)+f(y)$.

Since $g$ is super-additive, it follows that $f(x+y)+f(0) \geq
f(x)+f(y)$. We conclude that the cost of $\alpha'$ is not more
than the cost of $\alpha$.  By repeating the argument, we
eventually obtain an inclusion respecting order with optimal cost.
\end{proof}

An inclusion respecting order can be found simply by sorting the
intervals in increasing order of their lengths, breaking ties
arbitrarily.
\begin{theorem}\label{th:last}
The interval ordering problem $\big(\mathcal{I},f\big)$ with
$\mathcal{I}$ laminar and $f$ such that the function $g(x) =
f(x)-f(0)$ is super-additive, can be solved in  $O(n \log n)$ time
\end{theorem}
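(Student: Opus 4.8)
The plan is to reduce the theorem to the preceding lemma together with a single sorting step. By that lemma, when $g(x) = f(x)-f(0)$ is super-additive, every ordering of $\mathcal{I}$ that respects the inclusions is optimal; so it suffices to produce one inclusion-respecting ordering and to bound the time this takes.

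First I would check that sorting the intervals by non-decreasing length yields an inclusion-respecting ordering. The one small fact that needs verification is that in a laminar family $I_i \subsetneq I_j$ forces $|I_i| < |I_j|$ \emph{strictly}. Writing $I_i = [a_i,b_i)$ and $I_j = [a_j,b_j)$, the proper inclusion gives $a_j \le a_i$ and $b_i \le b_j$ with at least one of these inequalities strict; if we had $|I_i| = |I_j|$, then $b_j - b_i = a_j - a_i \le 0$, which forces $a_i = a_j$ and $b_i = b_j$ and contradicts properness. Hence, whenever $I_i \subsetneq I_j$, interval $I_i$ strictly precedes $I_j$ in any non-decreasing-length ordering, regardless of how ties are broken. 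Ties in length occur only among intervals none of which is a proper subset of another (so they are pairwise disjoint or identical), and for such intervals the relative order is immaterial to the inclusion-respecting property.

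Finally, I would note that sorting $n$ real numbers costs $O(n\log n)$ time, and that, if one also wants the optimal objective value, it can be obtained with an additional $O(n)$ work by reading the exposed-part lengths directly off the sorted endpoints. This gives the claimed $O(n\log n)$ running time. The only (very mild) obstacle is the verification of the strict-length claim and the remark about ties; everything else is immediate from the lemma and from standard comparison sorting.
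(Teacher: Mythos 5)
Your proposal matches the paper's argument: both reduce to the preceding lemma and observe that sorting by non-decreasing length gives an inclusion-respecting order in $O(n\log n)$ time. You simply spell out the (correct) strict-length observation that the paper leaves implicit when it says ties can be broken arbitrarily.
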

\begin{proof}
Immediate.
\end{proof}
We show in Section~\ref{complexity} that the time complexity of
any exact algorithm for solving this problem cannot be better than
the time complexity of algorithms for sorting. Thus, when
restricting ourselves to comparison-based algorithms, the bound in
Theorem~\ref{th:last} is the best possible (see
\cite{cormen:introduction}).
\begin{Rk}
Notice that the problem $\big(\mathcal{I},f\big)$ with $\mathcal{I}$
laminar and $f$ such that the function $g(x) = f(x)-f(0)$ is
sub-additive, can also be solved in $O(n \log n)$ time by sorting
the intervals in decreasing order of their lengths.
\end{Rk}

\subsection{Bottleneck variant of the interval ordering problem}
\label{bottleneck} \nopagebreak In this subsection, we consider
the bottleneck variant of the interval ordering problem. Referring
to the application described in Section~\ref{motivation}, instead
of looking for the exact complexity $O \big(\sum_{i = 1}^n
2^{\ell_i}\big)$ of the BruteForce search algorithm, we focus on
the maximum power of two that dominates this complexity. Hence,
solving the bottleneck variant gives us a solution that is an
approximation of the optimal solution to the interval ordering
problem. The bottleneck variant is explicitly defined as follows.
\begin{definition}
{\bf The Bottleneck Interval Ordering Problem (BIO):} Given a
function $f$ and a set $\mathcal{I}=\{I_1,\ldots,I_n\}$ of
intervals  over the real line, find an ordering $\alpha \in
\Sigma_n$ that minimizes the value
\begin{eqnarray*}
\max_{k=1,\ldots,n} f\big(|I_{\alpha(k)} \setminus
\bigcup{\rule[-0.2ex]{0ex}{2.2ex}}_{j=1}^{k-1}
I_{\alpha(j)}|\big).
\end{eqnarray*}
\end{definition}

A greedy algorithm for this variant would iteratively select the
interval with the smallest exposed part. A formal description is
given in Algorithm~\ref{alg:smallestExposed}.
\begin{algorithm}[!htp]
\caption{Smallest Exposed Part Algorithm} \label{alg:smallestExposed}
\begin{algorithmic}[1]
\STATE \textsf{for every $i=1,\ldots,n$, let $I'_i:=I_i$ be the
  exposed part of the $i$-th interval}
\STATE \textsf{Let $S=\{1,\ldots,n\}$ be the set of yet unselected intervals}

\FOR{$j=1,\ldots,n$}

\STATE \textsf{Identify $i\in S$ such that $|I'_i|$ is minimal.}

\STATE \textsf{$\alpha(j) := i$}

\STATE \textsf{$S:=S\backslash \{i\}$}

\FOR{$k \in S$}

\STATE \textsf{update $I'_k := I'_k \backslash I'_i$}

\ENDFOR

\ENDFOR

\STATE \textsf{Return $\alpha$}
\end{algorithmic}
\end{algorithm}
In the rest of this section we prove that Algorithm~\ref{alg:smallestExposed}
solves instances of \textrm{BIO} when the cost function $f$ is
non-decreasing. Our proof is based on the following lemmas.
\begin{lemma}\label{botoptprop}
Let $\big(\mathcal{I},f\big)$ be an instance of \textrm{BIO} with
a non-decreasing cost-function $f$. There exists an optimal
solution to $\big(\mathcal{I},f\big)$ starting with a smallest
interval.
\end{lemma}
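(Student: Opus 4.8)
The plan is to take an arbitrary optimal ordering $\beta$ and show that moving a globally smallest interval to the front does not increase the bottleneck cost. Let $I_s$ be an interval of minimum length $|I_s| = \min_j |I_j|$, and let $\beta$ be an optimal solution with bottleneck value $\mathrm{OPT}$. Consider the ordering $\alpha$ obtained from $\beta$ by deleting $I_s$ from its position and reinserting it at the front, leaving the relative order of all other intervals unchanged. I would then compare, term by term, the exposed length of each interval under $\alpha$ with its exposed length under $\beta$.

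First I would handle $I_s$ itself: when placed first, its exposed part is all of $I_s$, so its contribution is $f(|I_s|)$; since $|I_s| \le |I_j|$ for every $j$ and $f$ is non-decreasing, $f(|I_s|) \le f(|I_j|)$ for each $j$, and in particular $f(|I_s|) \le \mathrm{OPT}$ (because $\beta$ exposes at least some positive-length interval, or if all exposed lengths were smaller this would only help). More carefully, the interval that $I_s$ occupied in $\beta$ had some exposed length $\ge 0$; in $\beta$ the first interval in the order contributes $f$ of its full length $\ge f(|I_s|)$, so $f(|I_s|) \le \mathrm{OPT}$. Second, for every other interval $I_k$, its exposed part under $\alpha$ is a \emph{subset} of its exposed part under $\beta$: the set of intervals preceding $I_k$ in $\alpha$ is the same as in $\beta$ except possibly with $I_s$ added (if $I_s$ came after $I_k$ in $\beta$), and adding an interval to the "already placed" set can only shrink the exposed part. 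Hence $|I_k \setminus (\text{predecessors in }\alpha)| \le |I_k \setminus (\text{predecessors in }\beta)|$, so by monotonicity of $f$ the contribution of $I_k$ does not increase. Taking the maximum over all intervals, the bottleneck value of $\alpha$ is at most $\max\{f(|I_s|), \mathrm{OPT}\} = \mathrm{OPT}$, so $\alpha$ is also optimal and starts with a smallest interval.

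The only subtle point — and the step I would be most careful about — is the bound $f(|I_s|) \le \mathrm{OPT}$, which is where non-decreasing monotonicity of $f$ is genuinely used (it can fail for non-monotone $f$). Everything else is a purely combinatorial monotonicity argument about exposed parts under set inclusion and requires no assumption on $f$ beyond being non-decreasing, so I do not anticipate a real obstacle there; the exchange argument is clean because reinserting $I_s$ at the very front is the one move guaranteed not to disturb the "predecessor sets" of the remaining intervals in a harmful direction.
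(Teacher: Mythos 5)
Your proof is correct and takes essentially the same approach as the paper: both move a smallest interval to the front of an optimal sequence and argue, via the fact that predecessor sets only gain elements and $f$ is non-decreasing, that no term in the max increases. The paper phrases this as a contradiction and selects the first smallest interval in the sequence, but these are cosmetic differences from your direct argument.
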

\begin{proof}
We prove this result by contradiction. Let
$\big(\mathcal{I},f\big)$ be an instance of \textrm{BIO} with a
non-decreasing cost-function $f$. Assume that each optimal
sequence to $\big(\mathcal{I},f\big)$ does not start with a
smallest interval. Consider an optimal sequence $\alpha =
\big(\alpha(1),\ldots,\alpha(i_0),\ldots,\alpha(n)\big)$ to
$\big(\mathcal{I},f\big)$ with the corresponding optimal value
$val(\alpha)$. Clearly, $val(\alpha) \geq
f\big(|I_{\alpha(1)}|\big)$. Let $I_{\alpha(i_0)}$ be the first
smallest interval in $\alpha$, i.e., $|I_{\alpha(i_0)}|\leq
|I_{\alpha(j)}|$ for all $j\in \{1,\ldots,n\}$ and
$|I_{\alpha(i_0)}|< |I_{\alpha(j)}|$ for all $j\in
\{1,\ldots,i_0-1\}$. Consider now the sequence $\alpha'=
\big(\alpha(i_0),\alpha(1),\ldots,\ldots,\alpha(n)\big)$ where
$\alpha(i_0)$ is move to the first position in $\alpha$. It is
clear that this move only affects the intervals that were
sequenced before $I_{\alpha(i_0)}$ in $\alpha$. Further, since $f$
is non-decreasing, and the length of each affected interval cannot
have become larger, and $|I_{\alpha(i_0)}|\leq |I_{\alpha(1)}|$,
we conclude that the objective value achieved by $\alpha'$ does
not exceed $val(\alpha)$. Therefore, $\alpha'$ is also an optimal
sequence to $\big(\mathcal{I},f\big)$, which is a contradiction.
\end{proof}

Given an arbitrary instance $\big(\mathcal{I},f\big)$ of BIO with
$n$ intervals and $I_{i_0}\in \mathcal{I}$ a smallest interval, we
define the $I_{i_0}$-reduced instance
$\big(\mathcal{\bar{I}}_{i_0},f\big)$ with $n-1$ intervals as
follows. For any interval $I_j \in \mathcal{I}$,
\begin{enumerate}
\item if $I_j \neq I_{i_0}$ and $I_j \cap I_{i_0} = \emptyset$
then $I_j \in \mathcal{\bar{I}}_{i_0}$; \item if $I_j \neq
I_{i_0}$ and $I_j \cap I_{i_0} \neq \emptyset$ then $I_j\setminus
I_{i_0} \in \mathcal{\bar{I}}_{i_0}$.
\end{enumerate}
Furthermore, the real line is adapted accordingly such that
$I_j\setminus I_{i_0}$ is an interval for all $j\neq i_0$.

\begin{lemma}\label{botopadd}
Let $\big(\mathcal{I},f\big)$ be an instance of \textrm{BIO} with
a non-decreasing cost-function $f$. Let $I_{i_0}\in \mathcal{I}$
be a smallest interval and $\alpha_{i_0}$ be an optimal sequence
to $\big(\mathcal{\bar{I}}_{i_0},f\big)$. Then
$\big(i_0,\alpha_{i_0}\big)$ is an optimal sequence to
$\big(\mathcal{I},f\big)$.
\end{lemma}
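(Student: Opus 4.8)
The plan is to prove Lemma~\ref{botopadd} by combining Lemma~\ref{botoptprop} with an induction on the number of intervals. By Lemma~\ref{botoptprop} there is an optimal sequence to $\big(\mathcal{I},f\big)$ that starts with a smallest interval. The first thing I would check is that, since $I_{i_0}$ is a smallest interval and all smallest intervals are interchangeable as a starting move (any smallest interval has the same length, and placing it first costs $f(|I_{i_0}|)$), one may assume that $I_{i_0}$ in particular is the interval placed first in some optimal sequence $\alpha$ for $\big(\mathcal{I},f\big)$. The crux is then to show $val\big((i_0,\alpha_{i_0})\big) \le val(\alpha)$ for an optimal $\alpha_{i_0}$ of the reduced instance, and conversely $val(\alpha) \le val\big((i_0,\alpha_{i_0})\big)$, so that the two optimal values coincide and $(i_0,\alpha_{i_0})$ attains it.

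The key observation I would use is a bijection between sequences of $\big(\mathcal{I},f\big)$ that start with $I_{i_0}$ and sequences of the reduced instance $\big(\mathcal{\bar I}_{i_0},f\big)$: given $\alpha=(i_0,\alpha(2),\ldots,\alpha(n))$, map it to the sequence $\bar\alpha$ on $\mathcal{\bar I}_{i_0}$ that processes the interval $I_j\setminus I_{i_0}$ (or $I_j$ itself when $I_j\cap I_{i_0}=\emptyset$) in the same relative order as $\alpha$ processes $I_j$. I would argue that for each $k\ge 2$, the exposed part of $I_{\alpha(k)}$ relative to $\{I_{i_0}\}\cup\{I_{\alpha(2)},\ldots,I_{\alpha(k-1)}\}$ equals the exposed part of the corresponding reduced interval relative to its predecessors in $\bar\alpha$ — this is exactly because removing $I_{i_0}$ from the real line (and from every interval meeting it) does not change which points of $I_{\alpha(k)}$ are still uncovered after step $k-1$, once $I_{i_0}$ has already been placed first. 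Hence the multiset of exposed-part lengths arising from $\alpha$ is precisely $\{|I_{i_0}|\}$ together with the multiset arising from $\bar\alpha$ on the reduced instance. Since $|I_{i_0}|$ is the smallest interval length, $f(|I_{i_0}|)\le f(|I_{\alpha(k)}|\text{-exposed})$ is not needed in full, but because $f$ is non-decreasing the term $f(|I_{i_0}|)$ never exceeds the maximum, so
\[
val(\alpha)=\max\Big\{f(|I_{i_0}|),\ val(\bar\alpha)\Big\}=val(\bar\alpha)
\]
whenever $\bar\alpha$ is nonempty (i.e.\ $n\ge 2$), and more generally $val(\alpha)\ge val(\bar\alpha)$ always.

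Putting these together: take an optimal $\alpha$ for $\big(\mathcal{I},f\big)$ starting with $I_{i_0}$ (which exists by Lemma~\ref{botoptprop} and interchangeability of smallest intervals); its image $\bar\alpha$ satisfies $val(\bar\alpha)\le val(\alpha)=\mathrm{OPT}(\mathcal{I})$, so $\mathrm{OPT}(\mathcal{\bar I}_{i_0})\le \mathrm{OPT}(\mathcal{I})$. Conversely, for the given optimal $\alpha_{i_0}$ of the reduced instance, the sequence $(i_0,\alpha_{i_0})$ is the preimage under the bijection, and its value is $\max\{f(|I_{i_0}|),\mathrm{OPT}(\mathcal{\bar I}_{i_0})\}$; I would note $f(|I_{i_0}|)\le\mathrm{OPT}(\mathcal{I})$ since every sequence for $\mathcal I$ incurs $f$ of some exposed length at least as large as the minimum interval length (here a small monotonicity/feasibility remark is needed, using that the first interval in any sequence is fully exposed and has length $\ge|I_{i_0}|$), and $\mathrm{OPT}(\mathcal{\bar I}_{i_0})\le\mathrm{OPT}(\mathcal I)$ from the previous step, giving $val\big((i_0,\alpha_{i_0})\big)\le\mathrm{OPT}(\mathcal I)$, hence equality. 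The main obstacle I anticipate is verifying rigorously that the exposed-part lengths are genuinely preserved under the "delete $I_{i_0}$ and re-glue the line" operation — in particular that $I_j\setminus I_{i_0}$ really is an interval after the adjustment and that coverage relations among the modified intervals faithfully mirror those among the originals once $I_{i_0}$ is fixed first; this is where one has to be careful about intervals straddling $I_{i_0}$ versus contained in it. Everything else is bookkeeping with the bijection and the monotonicity of $f$.
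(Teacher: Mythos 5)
Your proof is correct and takes essentially the same route as the paper: reduce the instance by placing the smallest interval $I_{i_0}$ first (justified via Lemma~\ref{botoptprop}), observe that the residual instance coincides with $\big(\mathcal{\bar{I}}_{i_0},f\big)$, and compare optima via $val\big((i_0,\alpha_{i_0})\big)=\max\big\{f(|I_{i_0}|),\,val(\alpha_{i_0})\big\}$, with your version spelling out the bijection and the preservation of exposed-part lengths that the paper leaves implicit. One small slip, harmless to your conclusion since you only actually use $val(\alpha)\ge val(\bar\alpha)$ and the max formula: the intermediate assertion that $val(\alpha)=val(\bar\alpha)$ whenever $n\ge 2$ can fail, e.g.\ when every other interval collapses to length $0$ in $\mathcal{\bar{I}}_{i_0}$, so that $val(\bar\alpha)=f(0)$ may be strictly below $f(|I_{i_0}|)$.
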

\begin{proof} Let $val(\alpha_{i_0})$ be the total cost of the optimal solution
$\alpha_{i_0}$ to $\big(\mathcal{\bar{I}}_{i_0},f\big)$.
\begin{enumerate}
\item If $f\big(|I_{i_0}|\big)\geq val(\alpha_{i_0})$ then
$\big(i_0,\alpha_{i_0}\big)$ is clearly an optimal sequence to
$\big(\mathcal{I},f\big)$ (recall that $I_{i_0}$ is a smallest
interval). \item On the other hand, suppose that
$f\big(|I_{i_0}|\big) < val(\alpha_{i_0})$. Lemma~\ref{botoptprop}
implies that there exists an optimal sequence $\alpha'$ to
$\big(\mathcal{I},f\big)$ starting with $I_{i_0}$. After selecting
$I_{i_0}$, the instance that remains is
$\big(\mathcal{\bar{I}}_{i_0},f\big)$, for which $\alpha_{i_0}$ is
optimal. Therefore, $(i_0,\alpha_{i_0})$ is an optimal solution to
$\big(\mathcal{I},f\big)$.
\end{enumerate}
\end{proof}

\begin{theorem}
The Bottleneck Interval Ordering problem $\big(\mathcal{I},f\big)$
with $\mathcal{I}$ arbitrary and $f$ non-decreasing can be solved in
$O(n^2)$.
\end{theorem}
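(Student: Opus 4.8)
The plan is to prove that Algorithm~\ref{alg:smallestExposed} returns an optimal sequence for every non-decreasing $f$, by induction on the number $n$ of intervals, and then to argue the $O(n^2)$ running time separately by an amortized argument on a suitable data structure. The correctness part is short, because essentially all the work has already been done in Lemmas~\ref{botoptprop} and~\ref{botopadd}; the real content of the theorem is the time bound.

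For correctness, the base case $n=1$ is immediate. For the inductive step, observe that in its very first iteration the algorithm picks $\alpha(1)=i_0$ with $|I'_{i_0}|=|I_{i_0}|$ minimal, i.e., $I_{i_0}$ is a smallest interval; it then replaces each remaining $I'_k$ by $I'_k\setminus I'_{i_0}=I_k\setminus I_{i_0}$ and proceeds on the remaining $n-1$ exposed parts. The key point I would make is that this continuation \emph{is} an execution of Algorithm~\ref{alg:smallestExposed} on the $I_{i_0}$-reduced instance $\big(\bar{\mathcal I}_{i_0},f\big)$: the algorithm inspects only the current exposed-part lengths and their pairwise set differences, and neither of these is affected by the purely cosmetic ``adaptation of the real line'' that turns each $I_k\setminus I_{i_0}$ into a genuine interval. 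Hence, by the induction hypothesis, the algorithm outputs $(i_0,\alpha_{i_0})$ where $\alpha_{i_0}$ is an optimal sequence for $\big(\bar{\mathcal I}_{i_0},f\big)$, and by Lemma~\ref{botopadd} the sequence $(i_0,\alpha_{i_0})$ is optimal for $\big(\mathcal I,f\big)$. (Lemma~\ref{botoptprop} is what makes this reduction legitimate at all, since it guarantees that some optimum does start with a smallest interval.)

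For the running time, I would work with the partition of the line induced by the $2n$ original endpoints into at most $2n-1$ elementary segments; throughout the run, every exposed part $I'_k$ is a union of some of these segments. I would store, for each still-unselected interval $k$, only the number $|I'_k|$, and, for each elementary segment $s$, the list $L_s$ of unselected intervals whose current exposed part contains $s$. Finding the interval with minimal current exposed length then costs $O(n)$ per iteration. When $I_{i_0}$ is selected, I would scan the (at most $2n$) elementary segments lying in the original span of $I_{i_0}$; for each such segment $s$ that is still uncovered, I delete $s$ from the exposed part of every interval in $L_s$ (decrementing that interval's stored length by $|s|$) and then empty $L_s$, since $s$ is now permanently covered. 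Because each (interval, elementary segment) pair is destroyed at most once over the whole run, the total cost of all these deletions is $O(n^2)$; adding the $O(n)$ per iteration for the minimum search and the $O(n^2)$ for building the initial lists yields the claimed $O(n^2)$ bound.

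The only genuinely delicate step is this last analysis: a naive implementation that carries the sets $I'_k$ around as explicit unions of intervals would spend $\Theta(n)$ per pair of intervals per iteration and give only an $O(n^3)$ bound, so the amortized, segment-based accounting (charging deletions to (interval, segment) pairs that vanish for good) is what is really needed. Everything else — the induction and the appeal to the two lemmas — is routine.
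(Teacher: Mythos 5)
Your correctness argument is essentially the paper's: the theorem is reduced to Lemmas~\ref{botoptprop} and~\ref{botopadd} by induction on $n$, and the only observation you add beyond the paper's one-line proof (``This result follows from Lemma~\ref{botoptprop} and Lemma~\ref{botopadd}'') is the explicit verification that, after the algorithm selects the smallest interval $I_{i_0}$, the remainder of its run coincides with a run of the same algorithm on the $I_{i_0}$-reduced instance $\big(\bar{\mathcal I}_{i_0},f\big)$, since the ``adaptation of the real line'' in the definition of $\bar{\mathcal I}_{i_0}$ changes neither exposed-part lengths nor lengths of pairwise set differences. That verification is worth spelling out, and your version of it is correct.

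Where you go genuinely beyond the paper is the running-time bound. The paper asserts $O(n^2)$ without any implementation or accounting; the lemmas by themselves give only correctness. Your observation that a literal reading of Algorithm~\ref{alg:smallestExposed} (maintaining each $I'_k$ as an explicit union of intervals and recomputing $I'_k\setminus I'_{i}$ from scratch) only yields $O(n^3)$ is right: $I'_i$ can consist of $\Theta(n)$ maximal pieces, so each inner update can cost $\Theta(n)$, for $\Theta(n)$ updates per iteration over $n$ iterations. Your segment-based data structure --- the $O(n)$ elementary segments induced by the $2n$ endpoints, a stored length $|I'_k|$ per remaining interval, and per-segment lists $L_s$ of intervals whose exposed part still contains $s$ --- together with the amortized charge of each deletion to the (interval, segment) pair it destroys, is a clean and correct way to land on $O(n^2)$: $O(n^2)$ to build, $O(n)$ per iteration to extract the minimum, and $O(n^2)$ total for all decrements. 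So the proposal is correct and, on the complexity side, is more careful and more informative than the paper's own proof.
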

\begin{proof}
This result follows from Lemma~\ref{botoptprop} and
Lemma~\ref{botopadd}.
\end{proof}
\begin{Rk}
Notice that if the function $f$ is non-increasing then the instances
of \textrm{BIO} with this cost function can be solved with an
$O(n^2)$-time algorithm similar to
Algorithm~\ref{alg:smallestExposed} where in line~$6$ instead of
taking the interval with the smallest exposed part, we take the
interval with the longest exposed part.
\end{Rk}


\section{Complexity results}
\label{complexity} \nopagebreak This section presents a number of
negative results on the computational complexity of the interval
ordering problem. Our first result shows that even the easy
special cases discussed in Section~\ref{sectie_agreeable} are not
completely straightforward, and shows the optimality of the
algorithm given in Section~\ref{sectie_laminar}.

\begin{theorem}
\label{th:sorting} The interval ordering problem is at least as
hard as the SORTING problem, even if (a) the intervals are
agreeable, or if (b) the intervals form a laminar set.
Consequently, every comparison-based algorithm for these special
cases will have a time complexity $\Omega(n\ln n)$.
\end{theorem}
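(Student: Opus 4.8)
The plan is to give a reduction from SORTING to the interval ordering problem, so that a subquadratic (more precisely, $o(n\ln n)$) comparison-based algorithm for the interval ordering problem would yield one for sorting, contradicting the classical $\Omega(n\ln n)$ lower bound for comparison-based sorting. Given $n$ distinct positive reals $w_1,\ldots,w_n$ to be sorted, I would construct in linear time an instance of the interval ordering problem whose unique optimal ordering encodes the sorted order of the $w_i$.

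For part (b), the laminar case, the natural construction is a set of pairwise \emph{nested} intervals: take $I_i = [-W_i,\, W_i)$ where $W_i = w_1 + w_2 + \cdots$ is arranged so that $|I_i| = 2W_i$ is a strictly increasing function of $w_i$ — concretely one can set $|I_i|$ proportional to $w_i$ by choosing $I_i=[0,w_i)$ with a common left endpoint, which is laminar since any two share the left endpoint and one contains the other. Pick the cost function $f$ so that $g(x)=f(x)-f(0)$ is strictly super-additive (e.g.\ $f(x)=x^2$, or $f(x)=2^x$ as in the application), so that by the lemma preceding Theorem~\ref{th:last} every optimal ordering must respect the inclusions; with strict super-additivity the optimum is \emph{unique} up to the (here nonexistent) ties, and it lists the intervals in increasing order of length, i.e.\ increasing order of $w_i$. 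Reading off $\alpha$ then reveals the sorted order. Thus any algorithm solving the laminar interval ordering problem sorts $w_1,\ldots,w_n$, and if it is comparison-based it must use $\Omega(n\ln n)$ comparisons.

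For part (a), the agreeable case, I would use essentially the same trick but arrange the intervals to be agreeable rather than laminar: for instance $I_i = [\,i\delta,\; i\delta + w_i\,)$ for a sufficiently small $\delta>0$, or, more robustly, a ``fan'' of intervals sharing a common point so that both the left-endpoint order and the right-endpoint order coincide with some fixed permutation while the lengths are the $w_i$ in arbitrary input order. One then argues, using the structure exposed by Lemma~\ref{lemma_2} (or directly, since with a convex $f$ and essentially nested-like overlap the cost is governed by how the exposed lengths are apportioned), that the optimal ordering must process the intervals in a way that sorts them by length; again reading off $\alpha$ recovers the sorted sequence.

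The main obstacle — and the point to be careful about — is making the reduction genuinely \emph{comparison-based and information-preserving}: the construction must use only $O(n)$ arithmetic/comparison operations on the inputs (so it does not already ``sort for free''), and the optimal ordering must be forced to reveal the full permutation, not merely partial order information. This requires (i) choosing $f$ so that strict super-additivity (resp.\ strict convexity) makes the optimum's order-by-length \emph{strict}, ruling out ties that could hide the permutation, and (ii) verifying that the geometric layout ($I_i=[0,w_i)$ for laminar; the fanned layout for agreeable) indeed produces a laminar resp.\ agreeable family and that no reordering of equal-cost orderings can fail to sort. Once these are checked, the $\Omega(n\ln n)$ conclusion is immediate from the standard decision-tree bound for sorting.
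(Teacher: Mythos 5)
Your part (b) is essentially the paper's proof: intervals $I_i=[0,w_i)$ with a super-additive $g(x)=f(x)-f(0)$ (the paper uses $f(x)=2^x$), so the unique optimal ordering sorts by length. What you missed is that this \emph{single} construction already settles part (a) as well: with a common left endpoint $0$, the left endpoints are trivially in agreeable position with the right endpoints, so $\{[0,w_i)\}$ is simultaneously laminar \emph{and} agreeable. The paper explicitly makes this observation and uses one reduction for both claims.

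Your separate construction for (a), $I_i=[i\delta,i\delta+w_i)$, has a genuine gap: it is agreeable only if $b_i=i\delta+w_i$ is also nondecreasing in $i$, i.e.\ only if $w_i - w_{i+1}\le\delta$ for all $i$, which fails for generic inputs; and you cannot choose $\delta$ large enough to fix this without knowing the spread of the $w_i$, nor small enough if the $w_i$ are nearly equal. Your ``fan'' fallback is fine but collapses to the laminar construction. Also, Lemma~\ref{lemma_2} is the wrong tool here: it is a recursion for convex $f$ and does not by itself force a sort-by-length optimum; the super-additivity argument is both cleaner and sufficient. In short: keep your (b) argument, note it is agreeable too, and drop the separate (a) construction.

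Your concern about the reduction being comparison-based and information-preserving is well taken and not discussed in the paper, but it is unproblematic here: building $\{[0,w_i)\}$ costs no comparisons, and reading the sorted order off the unique optimal permutation is $O(n)$.
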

\begin{proof}
Let $x_1,\ldots,x_n$ be an arbitrary sequence of positive real
numbers that form an instance of the SORTING problem. We construct
a corresponding instance of the interval ordering problem that
consists of the intervals $I_j=[0,x_j)$ for $j=1,\ldots,n$,
together with the cost function $f(x)=2^x$. Note that this set of
intervals is agreeable and laminar.

Note that the cost function $f(x)$ is such that $g(x) = f(x)-f(0)$
is super-additive on the positive real numbers. This observation
easily yields that the optimal ordering of the intervals must
sequence them by increasing right endpoint, and hence induces a
solution to the SORTING problem.
\end{proof}

Next, we will discuss the computational complexity of the interval
ordering problem.
We will show that there is little hope for finding a polynomial-time
algorithm for solving the interval ordering problem in general.
The reduction is from the following variant of the NP-hard PARTITION
problem~\cite[problem SP12]{garey79}.\bigskip\\
{\bf Instance:} A finite set $\{q_1,q_2,\ldots,q_n\}$ of $n$
positive integers with sum $2Q$; an integer $k$.\\
{\bf Question:} Does there exists an index set
$J\subseteq\{1,\ldots,n\}$ with $|J|=k$, such that $\sum_{j\in
J}q_j=\sum_{j\notin J}q_j=Q$?\bigskip
\begin{lemma}\label{lemma_complexity}
Let $I$ be an instance of PARTITION and $N$ be an integer such
that $2^{N-1}>2^nQ+k$. Then there exists an instance
$\big(\mathcal{I},f\big)$ of the interval ordering problem that
can be built from $I$ in polynomial time and that satisfies the
following conditions:
\begin{itemize}
\item[(i)] If $I$ is a YES-instance of PARTITION, then the total
cost of an optimal sequence to $\big(\mathcal{I},f\big)$ is at
most $2^nQ+n-k$. \item[(ii)] If $I$ is a NO-instance of PARTITION,
then the total cost of an optimal sequence to
$\big(\mathcal{I},f\big)$ is at least $2^N+2^nQ+n-k$.
\end{itemize}
\end{lemma}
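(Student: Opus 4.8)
We need to construct, from a PARTITION instance with positive integers $q_1,\dots,q_n$ summing to $2Q$ and target size $k$, an interval ordering instance whose optimal cost encodes whether a balanced $k$-subset exists. The natural idea is to use the cost function $f(x)=2^x$ (matching the motivating application) and to build, for each item $q_j$, a small gadget of intervals whose "exposed length" when sequenced contributes roughly $q_j$ to an exponent, while a single large "selector" interval of length roughly $\sum_{j\in J} q_j$ determines which items are placed on the "covered" side versus the "uncovered" side. The threshold $2^{N-1}>2^nQ+k$ is chosen so that a single badly-placed interval of length $N$ (or $N$-ish) dominates the entire rest of the cost; thus the YES/NO gap of Lemma \ref{non-agreeable}-type bounds is genuinely a gap of size $\ge 2^N$.

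The plan is as follows. First I would describe the gadget: introduce $n$ "item intervals" $I_j$, each of length $1$ but positioned so that covering $I_j$ early (by some large interval) versus exposing it costs $f(0)=1$ versus $f(1)=2$; scaling by a factor encoding $q_j$ so that the aggregate contribution of the exposed items equals $2^{(\text{something})}$. Actually the cleaner route, and the one I expect the authors take, is: place the $q_j$ as lengths of disjoint "blocks" on the line, and introduce one big interval $B_J$ for the candidate subset. The total exposed length of $B_J$ after the items in $J$ have been sequenced first is $\sum_{j\notin J} q_j$; so the cost term $f(\sum_{j\notin J}q_j)=2^{2Q-\sum_{j\in J}q_j}$ is minimized exactly when $\sum_{j\in J}q_j$ is as large as possible, and is "balanced-optimal" at $Q$. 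One then adds cardinality-control intervals (length-$0$ or length-$1$ fillers) so that the $|J|=k$ constraint is enforced, using the additive $f(0)$ contributions, giving the additive "$n-k$" term. Second, I would verify claim (i): exhibit the explicit sequence that sequences the $k$ chosen items first, then $B_J$, then the rest, and compute its cost as $2^nQ+n-k$ (here the "$2^nQ$" presumably arises as $2^{\log_2 n + \dots}$ or from $n$ copies — I'd need to check the exact scaling the construction uses, e.g. $n$ parallel selector intervals or a single one of length $n+\log Q$). Third, for claim (ii), I would argue that in a NO-instance every sequence either (a) exposes some interval of length $\ge N$, immediately costing $\ge 2^N$, or (b) keeps all exposed lengths small but then some selector interval has exposed length $\ge Q+1$ (since no subset sums to exactly $Q$ with the right cardinality), forcing an extra $2^{Q+1}-2^Q=2^Q$ — but to reach the stated bound $2^N+2^nQ+n-k$ the construction must be rigged so that "missing the partition" forces an exposed length that is not merely $Q+1$ but something on the order of $N$; that is where the choice $2^{N-1}>2^nQ+k$ is used, presumably by making one critical interval have length $N$ plus-or-minus the partition discrepancy.

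The main obstacle I anticipate is getting the arithmetic of the gadget to line up exactly so that (1) the cost of the honest YES-sequence is \emph{exactly} $2^nQ+n-k$ and not merely $O(2^nQ)$, and (2) any deviation in a NO-instance is amplified all the way up to an additive $2^N$ rather than a mere additive $2^Q$. This forces a careful design where the "item" lengths are something like $q_j$ scaled so their sum sits at a precise power, and where there is a distinguished interval whose exposed length is $N$ in the good case but jumps by at least $1$ in the exponent (hence by at least $2^N$ in value) whenever the partition fails or the cardinality is wrong. I would handle this by first fixing the list of intervals with explicit endpoints parameterized by $q_j$, $Q$, $k$, $N$, then separately proving the two directions; the "$n-k$" bookkeeping and the use of super-additivity of $g(x)=2^x-1$ (already exploited in Theorem \ref{th:sorting}) to push all forced-covered intervals to cost exactly $f(0)=1$ each will be the routine part, and the delicate part is the single inequality chain showing a NO-instance cannot avoid paying $2^N$.
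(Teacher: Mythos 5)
Your proposal takes the wrong cost function, and this is fatal to the whole plan. You fix $f(x)=2^x$ (the cost function of the motivating application), but the paper's construction does \emph{not} use that $f$; indeed, the conclusion of the paper explicitly remarks that the NP-hardness construction ``does not yield anything'' for $f(x)=2^x$, and the complexity of that case is posed as an open problem. With $f(x)=2^x$ the arithmetic in your claim (i) cannot come out to $2^nQ+n-k$: your sequence for a YES-instance would pay terms like $2^{\sum_{j\notin J}q_j}$, which is doubly exponential in the input size (since $Q$ is given in binary), and in any case there is no way to make a sum of powers $2^{x_i}$ equal the mixed expression $2^nQ+n-k$. Similarly, your NO-side argument of ``forcing an extra $2^{Q+1}-2^Q=2^Q$'' does not reach anything like the stated gap of $2^N$, and the threshold $2^{N-1}>2^nQ+k$ plays no identifiable role in your outline.

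The key idea you are missing is that the lemma picks a \emph{designed} cost function, not the application's one: $f(x)=0$ if $x$ is a power of two, and $f(x)=x$ otherwise. The instance has $n$ pairwise-disjoint \emph{element-intervals} of lengths $\ell_i=2^nq_i+1$ (odd and $>1$, hence never a power of two, so each costs its full length when exposed), one \emph{dummy-interval} of length $2^N-2^nQ-k$, and one \emph{main-interval} of length $2^N+2^nQ+n-k$ covering all the others. In a YES-instance, sequencing the element-intervals with $i\notin J$ first costs exactly $2^nQ+n-k$; then the main interval's exposed part has length exactly $2^N$, which costs $0$, and everything else is covered and costs $f(0)=0$. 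In a NO-instance, the point is that the main interval's exposed length can \emph{never} be a power of two (before the dummy is sequenced, the only candidate is $2^N$, which would require a subset of elements of total length $2^nQ+n-k$, i.e.\ a valid partition; after the dummy is sequenced, the exposed part is a sum of odd lengths $\ell_i$, which cannot be a power of two), so the main interval alone already costs its full length $2^N+2^nQ+n-k$. This is where the hypothesis $2^{N-1}>2^nQ+k$ is used: it sandwiches the main interval's exposed length strictly between $2^{N-1}$ and $2^{N+1}$, so $2^N$ is the unique candidate power of two. None of this is reachable from the $f(x)=2^x$ starting point you chose.
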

\begin{proof}
Consider an arbitrary instance $I$ of PARTITION. We build the
instance $\big(\mathcal{I},f\big)$ of the interval ordering
problem as follows. The cost function $f: \mathbb{N}\to\mathbb{N}$
is defined by $f(x)=0$ if $x$ is a power of two, and by $f(x)=x$
otherwise. The set $\mathcal{I}$ consists of $n+2$ intervals.
First, for $i=1,\ldots,n$ there is a so-called
\emph{element-interval} of length $\ell_i=2^nq_i+1$. These
element-intervals are pairwise disjoint and put back to back, so
that they jointly cover the interval from $0$ to
$L:=\sum_{i=1}^n\ell_i=2^{n+1}Q+n$. Secondly, there is a so-called
\emph{dummy-interval} of length $\ell_{n+1}=2^N-2^nQ-k$ that goes
from $L$ to $L+\ell_{n+1}$. Thirdly, there is  the so-called
\emph{main-interval} that covers all other intervals, and that
goes from $0$ to $L+\ell_{n+1}$; hence the length of the
main-interval is $2^N+2^nQ+n-k$. Clearly, this construction of
$\big(\mathcal{I},f\big)$ can be done in polynomial time. Next, we
prove ({\it i}) and ({\it ii}).

({\it i}) Assume that $I$ is a YES-instance of PARTITION, and let
$J\subseteq\{1,\ldots,n\}$ be the corresponding index set. First
select the element-intervals that correspond to the $q_i$ with
$i\notin J$, then the main-interval, followed by the remaining
element-intervals, and finally the dummy-interval. For the first
batch of element-intervals we pay a cost of $2^nQ+n-k$. The
exposed part of the main-interval then has length $2^N$, which
yields a cost of~$0$. This reduces the exposed part of all
remaining intervals down to length~$0$. The overall total cost is
then $2^nQ+n-k$.

({\it ii}) Now assume that $I$ is a NO-instance of PARTITION. We
claim that in this case no sequencing can ever turn the length of
the exposed part of the main-interval into a power of~$2$.  Then
the total cost is proportional to the total length, and hence at
least $2^N+2^nQ+n-k$. It remains to prove the claim. We
distinguish two cases. The first case deals with the time before
the dummy-interval is sequenced. At such a point in time the
length of the exposed part of the main-interval equals the length
of the dummy-interval plus the length of the currently unsequenced
element-intervals. The length of the dummy-interval is
$2^N-2^nQ-k>2^{N-1}$. The length of the dummy interval plus the
length of all element-intervals is $2^N+2^nQ+n-k<2^{N+1}$. Hence
the only candidate power of $2$ would be $2^N$. But in this case
the subset of the element-intervals would have a total length of
$2^nQ+n-k$, which would correspond to a solution to the PARTITION
instance $I$; a contradiction. The second case deals with the time
after the dummy-interval has been sequenced. At such a point in
time the length of the exposed part of the main-interval equals
the length of the remaining unsequenced element-intervals. But the
total length of such a subset can never be a power of~$2$.
\end{proof}

Of course, Lemma~\ref{lemma_complexity} immediately yields the
NP-hardness of the interval ordering problem. We will also derive
from it the inapproximability of this problem. Suppose for the
sake of contradiction that there is a polynomial-time
approximation algorithm with some finite worst-case guarantee
$\theta$. Pick an arbitrary instance $I$ of PARTITION, and choose
an integer $N$ sufficiently large so that
\begin{eqnarray}
2^N ~>~\big(\theta-1\big)\big(2^nQ+n-k\big).
\end{eqnarray}
Then $N$ is roughly $n+\log Q+\log\theta$, and hence its length is
polynomially bounded in the size of instance $I$. We construct the
instance $\big(\mathcal{I},f\big)$ of the interval ordering
problem as indicated in the proof of Lemma~\ref{lemma_complexity},
and feed it into the approximation algorithm. The answer allows us
to decide in polynomial time whether instance $I$ is a
YES-instance or a NO-instance of PARTITION.

\begin{theorem}
\label{th:hardness}
The interval ordering problem is NP-hard.
Furthermore, the interval ordering problem does not possess any
polynomial-time approximation algorithm with constant worst-case
guarantee, unless $P=NP$.
\end{theorem}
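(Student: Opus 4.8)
The theorem is essentially a corollary of Lemma~\ref{lemma_complexity} together with the gap argument sketched in the paragraph preceding it, so the plan is simply to package those observations cleanly. First I would establish NP-hardness: starting from an arbitrary PARTITION instance $I$ with $n$ items and half-sum $Q$, choose the smallest integer $N$ with $2^{N-1} > 2^nQ + k$; then $N = O(n + \log Q)$, so $N$ has polynomial encoding length and, by Lemma~\ref{lemma_complexity}, the interval ordering instance $\big(\mathcal{I},f\big)$ is built from $I$ in polynomial time. By parts (i) and (ii) of that lemma the optimal cost is at most $2^nQ + n - k$ when $I$ is a YES-instance and at least $2^N + 2^nQ + n - k$ when $I$ is a NO-instance; since these two quantities differ by more than $2^N > 0$, any algorithm that returns an optimal ordering (and hence, in polynomial additional time, its cost) would decide PARTITION. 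This gives NP-hardness.

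For the inapproximability claim I would argue by contradiction: suppose some polynomial-time algorithm $\mathcal{A}$ always returns a solution of cost at most $\theta$ times the optimum, for a fixed constant $\theta \ge 1$. Given a PARTITION instance $I$, I would now pick $N$ to be an integer that is at once large enough for Lemma~\ref{lemma_complexity} (i.e. $2^{N-1} > 2^nQ + k$) and large enough that $2^N > (\theta - 1)(2^nQ + n - k)$; taking the maximum of the two induced lower bounds suffices, and since $\log\!\big((\theta-1)(2^nQ+n-k)\big) = O(n + \log Q + \log \theta)$ this $N$ still has polynomial size. I then construct $\big(\mathcal{I},f\big)$ as in Lemma~\ref{lemma_complexity} and run $\mathcal{A}$ on it. If $I$ is a YES-instance, the value returned is at most $\theta(2^nQ+n-k) = (2^nQ+n-k) + (\theta-1)(2^nQ+n-k) < (2^nQ+n-k) + 2^N$; if $I$ is a NO-instance, the optimum, hence the value returned, is at least $2^N + 2^nQ + n - k$. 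The two ranges of possible outputs are disjoint, so comparing the returned value with the threshold $2^N + 2^nQ + n - k$ decides PARTITION in polynomial time, which is impossible unless $P = NP$.

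The only real work left is bookkeeping: checking that a single choice of $N$ meets both lower bounds simultaneously, that its bit-length remains polynomial in the size of $I$ (which it does, because $\theta$ is a constant and the $q_i$, hence $Q$, are encoded in binary), and that the YES- and NO-instance output ranges are genuinely separated by the chosen threshold. All the structural content --- the construction of the element-, dummy- and main-intervals and the proof that in the NO case the exposed length of the main-interval can never become a power of two --- is already supplied by Lemma~\ref{lemma_complexity}, so I do not anticipate any further obstacle.
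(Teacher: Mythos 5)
Your proposal follows the same route as the paper: apply Lemma~\ref{lemma_complexity} as a gap reduction from PARTITION, with NP-hardness immediate from the gap and inapproximability from choosing $N$ large enough that $2^N > (\theta-1)(2^nQ+n-k)$ so a $\theta$-approximate value still falls on the correct side of the threshold $2^N + 2^nQ + n - k$. The only difference is that you are slightly more explicit than the paper in noting that $N$ must simultaneously satisfy the lemma's precondition $2^{N-1} > 2^nQ + k$ and the inapproximability bound, and in spelling out the threshold comparison; this is harmless bookkeeping that the paper leaves implicit.
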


\section{Conclusion}\label{conclusion}
\nopagebreak This paper studies the  problem of ordering a given
set of intervals on the real line to minimize the total cost,
where the cost  incurred for an interval depends on the length of
its exposed part when it is processed. We were motivated to
consider this problem by an application in molecular biology. Our
work proposes polynomial-time algorithms for some special cases of
the problem. Furthermore, we prove that the problem is NP-hard and
is unlikely to have a constant-factor-approximation algorithm.

Some interesting special cases of our problem remain open. For
instance, when the cost function is continuous and convex, (and
without any assumption on the structure of the intervals), the
complexity of the problem is not settled. In particular, the case
$f(x) = 2^x$ is interesting (note that our NP-hardness
construction does not yield anything for this particular cost
function). Finally, it would be interesting to see other special
cases that can be solved in polynomial time.

\section*{Acknowledgements}
We thank an anonymous reviewer, as well as Leo Liberti, for
comments on an earlier version of this manuscript.

Gerhard Woeginger has been supported by the Netherlands Organization
for Scientific Research (NWO), grant $639.033.403$, by DIAMANT (an
NWO mathematics cluster), and by BSIK grant $03018$ (BRICKS: Basic
Research in Informatics for Creating the Know\-ledge Society).

Maurice Queyranne's research was supported in part by a Discovery
grant from the Natural Sciences Research Council (NSERC) of
Canada.

\bibliographystyle{unsrt}
\bibliography{biblioIO}
\end{document}